\begin{document}
%
\ifextendedversion
\title{\sys: Protecting ASLR Against Microarchitectural Attacks (Extended Version)}
\else
\title{\sys: Protecting ASLR Against Microarchitectural Attacks}
\fi

\author{\IEEEauthorblockN{Shixin Song}
	\IEEEauthorblockA{Massachusetts Institute of Technology\\
		shixins@mit.edu}
	\and
	\IEEEauthorblockN{Joseph Zhang}
	\IEEEauthorblockA{Massachusetts Institute of Technology\\
		jzha@mit.edu}
	\and
	\IEEEauthorblockN{Mengjia Yan}
	\IEEEauthorblockA{Massachusetts Institute of Technology\\
		mengjiay@mit.edu}}
	

%


\IEEEoverridecommandlockouts
\makeatletter\def\@IEEEpubidpullup{6.5\baselineskip}\makeatother
\IEEEpubid{\parbox{\columnwidth}{
		Network and Distributed System Security (NDSS) Symposium 2025\\
		24-28 February 2025, San Diego, CA, USA\\
		ISBN 979-8-9894372-8-3\\
		https://dx.doi.org/10.14722/ndss.2025.240264\\
		www.ndss-symposium.org
}
\hspace{\columnsep}\makebox[\columnwidth]{}}


\maketitle

\begin{abstract}
Address Space Layout Randomization (ASLR) is one of the most prominently deployed mitigations against memory corruption attacks.
ASLR randomly shuffles program virtual addresses to prevent attackers from knowing the location of program contents in memory.
Microarchitectural side channels have been shown to defeat ASLR through various hardware mechanisms.
We systematically analyze existing microarchitectural attacks and identify multiple leakage paths. Given the vast attack surface exposed by ASLR, it is challenging to effectively prevent leaking the ASLR secret against microarchitectural attacks.

Motivated by this, we present \sys, a software-hardware co-design mitigation that strengthens ASLR against these attacks.
\sys uses a new memory mapping interface to remove secret randomized bits in virtual addresses before translating them to their corresponding physical addresses.
This extra step hides randomized virtual addresses from microarchitecture structures, preventing side channels from leaking ASLR secrets. 
\sys is transparent to user programs and incurs low overhead.
We prototyped and evaluated our design on Linux using the hardware simulator gem5.
\end{abstract}



%
\IEEEpeerreviewmaketitle

\section{Introduction}
\label{sec:intro}
Memory corruption vulnerabilities are some of the oldest security problems that continue to pose a serious security threat to modern systems~\cite{szekeres2013sok, mitre}.
Among all the memory safety mechanisms proposed in the last few decades, Address Space Layout Randomization (ASLR)~\cite{team2003pax, edge2013kernel}, has shown to be effective in raising the barrier of attacks and has become one of the most prominently deployed mitigations in modern systems.
ASLR works by randomly arranging the positions of code or data regions for the
kernel or user-space applications.
If the attackers cannot reliably determine the location of specific code or data, they will have difficulty carrying out control-flow hijacking attacks, such as return-oriented programming~\cite{roemer2012return} and jump-oriented programming~\cite{bletsch2011jump}.
With ASLR, the attacker must perform an extra information disclosure step utilizing other existing vulnerabilities to leak ASLR secret before conducting their exploits.

However, ASLR has been defeated with various microarchitectural side channels.
They pose a real threat, as we witness an increasing number of such attacks being utilized in real-world software exploitations.
For example, in 2017, a macOS kernel 0-day exploit~\cite{Siguza2017IOHIDeous}
used the prefetch attack~\cite{gruss2016prefetch} to bypass ASLR.
In 2022, a Linux kernel exploit (CVE-2022-42703)~\cite{jenkins2022exploiting}
also used side channels to bypass ASLR.
In the same blog post, the authors stated that
``KASLR is comprehensively compromised on x86 against local attackers, and has been for the past several years, and will be for the indefinite future.''

Among these microarchitectural-attack-assisted ALSR bypasses, a wide range of channels can be utilized, including TLBs, caches, and BTBs, using speculative execution, or even power-induced timing information~\cite{gras2017aslr,zhao2022binoculars,gras2018translation,koschel2020tagbleed,gruss2016prefetch,liu2023entrybleed,evtyushkin2016jump,evtyushkin2018branchscope,lee2017inferring,goktas2020speculative,jang2016breaking,hund2013practical,canella2020kaslr,schwarz2019store,canella2019fallout,weber2021osiris,lipp2022amd}.
Even worse, given the large attack surface, it seems that almost every newly discovered side-channel attack variant will likely become a new ASLR-bypassing attack vector.
Given this phenomenon, two questions present themselves: 
(1) why has ASLR become such a fragile target for microarchitectural attacks, and
(2) how can we secure ASLR to broadly block existing and potential future attack vectors?

\pgheading{Challenges}
With a detailed investigation of existing ASLR bypasses using microarchitectural side channels, we find that the microarchitecture features that can be leveraged to use as a leakage channel are diverse and continue growing.
Therefore, when protecting ASLR against side-channel attacks, addressing each individual channel or feature is not an appealing approach.
For example, FLARE~\cite{canella2020kaslr} narrowly focuses on closing a single channel, i.e., the address translation latency, while ignoring the abundant other side channels in modern processors, as well as other existing attack vectors.
Instead, in this paper, we focus on blocking the leakage at the source by restricting the usage of the ASLR secret in both software and hardware.

Consider how the ASLR secret is used.
ASLR shifts the location of a memory region by a \textit{secret offset}.
As such, the secret offset determines the virtual memory layout, i.e., which memory region is mapped and which is unmapped.
The secret offset is also embedded in code and data pointers that are extensively used while executing a victim program.
We systematically analyze and categorize microarchitectural ASLR bypasses into three leakage paths.

In the first leakage path, the attacker probes the virtual memory layout.
This class of attacks relies on the fact that probing a mapped address versus an unmapped address resolves different microarchitectural side effects and thus distinct latency~\cite{lipp2022amd,goktas2020speculative,hund2013practical,jang2016breaking,wikner2023phantom,canella2020kaslr,weber2021osiris,schwarz2019store,canella2019fallout}.
Additionally, these probe operations can be stealthy and do not cause system crashes, because they are either conducted under speculation or assisted with cache manipulation instructions such as \texttt{prefetch}.

Second, as the ASLR secret offset is embedded in code and data pointers, the attacker can leak the secret by monitoring the victim using its secret-dependent pointers to fetch instructions or perform loads and stores~\cite{lipp2020take,gras2017aslr,zhao2022binoculars,gras2018translation,koschel2020tagbleed,gruss2016prefetch,liu2023entrybleed,evtyushkin2016jump}.
These operations can result in distinguishable side effects on BTB, TLB, page table walker, caches, and DRAM.

Finally, the attacker can leak the victim's pointers by using the Spectre attack gadget and its variants.
For example, the attacker may load the secret pointer into a register and then use it as the address of a load or a store instruction.
There exists a substantial amount of work to block this leakage path, such as STT~\cite{yu2019speculative}, NDA\cite{weisse2019nda}, InvisiSpec~\cite{yan2018invisispec}, and others~\cite{ainsworth2021ghostminion,choudhary2021speculative,barber2019specshield,fustos2019spectreguard,koruyeh2020speccfi,schwarz2020context,yu2018data,daniel2023prospect,loughlin2021dolma,mosier2023serberus,yu2020speculative,ainsworth2020muontrap,khasawneh2019safespec,kiriansky2018dawg,li2019conditional,saileshwar2019cleanupspec,sakalis2019ghost,sakalis2019efficient}.
Unfortunately, existing mitigations targeting Spectre and its variants can only block the third leakage path, leaving the other two leakage paths unblocked.
Moreover, many mitigations work by selectively delaying secret-dependent speculative execution.
Such schemes are applicable to the backend of a processor (at the load/store queue) with moderate performance loss, but they are unappealing to the frontend of the processor, where delaying fetching instructions means leaving the rest of the processor seriously underutilized.

\begin{figure}
    \centering
    \includegraphics[width=\linewidth]{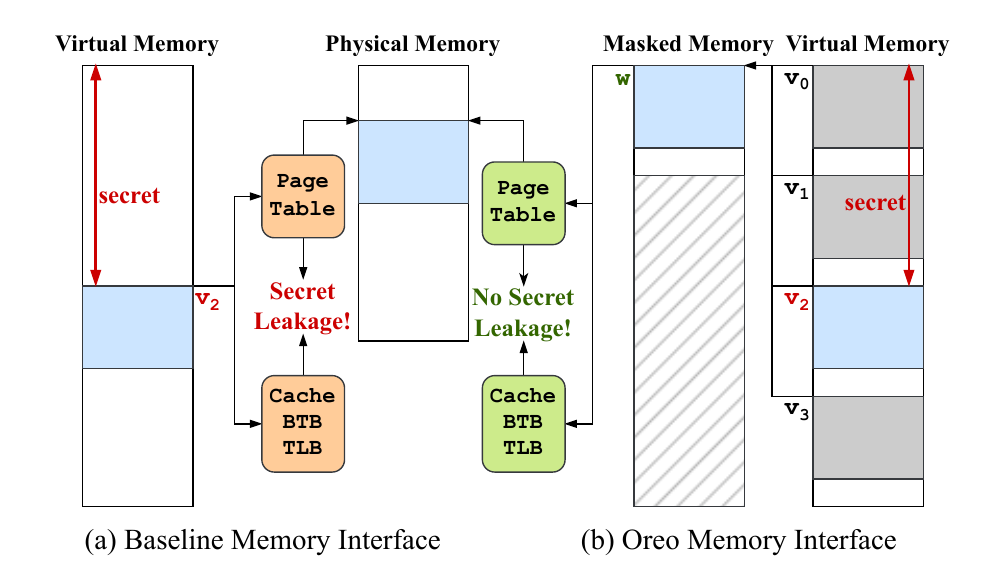}
    \caption{Overview of \sys's new memory interface}
    \label{fig:sys-mem-layout-overview}
\end{figure}

\pgheading{\sys}
The challenges discussed above motivate us to design \sys, a software-hardware co-design scheme that secures ASLR against wide range of microarchitectural side-channel attacks.
Specifically, given a randomized virtual address, \sys aims to protect selected randomized bits from being leaked via microarchitectural attacks following the first two leakage paths.
We refer to these protected bits as \emph{\oreobits}.

The core of \sys is a new memory interface, as shown in Figure~\ref{fig:sys-mem-layout-overview}.
The left side of the figure shows the existing memory interface utilizing ASLR, mapping randomized virtual addresses to physical addresses.
A randomized virtual address, which embeds the secret ASLR offset, is used as input to lookup entries in various hardware structures (caches, BTB, TLB, etc.) and critical software structures (the page table).

In contrast, \sys introduces a new layer of memory, called \emph{the masked memory}, sitting between the virtual and physical memory, shown on the right side of the figure.
A masked address is constructed from a randomized virtual address with the \oreobits being redacted.
In this way, \sys can map multiple virtual addresses to the same masked address and then maps masked addresses to physical addresses using a modified page table.
For example, in Figure~\ref{fig:sys-mem-layout-overview}, we show a valid region in the virtual address space starting with address $v_2$. 
\sys maps $v_2$ and another three invalid addresses $v_0,v_1,v_3$ to the same masked address $w$.
All the hardware structures that used to use virtual addresses as inputs, now switch to using masked addresses.

\sys additionally changes the memory security check procedure.
Given mapped and unmapped virtual addresses, which only differ in the \oreobits, accessing them on \sys will have the exact same microarchitectural side effects during speculation and only result in different architectural behaviors upon instruction commit time.

We further identify research challenges in selecting which bits to be protected by \sys.
In fact, not all the ASLR randomized bits can become \oreobits.
Moreover, it poses a security dilemma to obtain the entropy towards locating gadgets for both control-flow hijacking attacks and speculative execution attacks.
We provide bits selection strategies to achieve entropy towards both attacks, and meanwhile to be adoptable by existing systems.

We prototype our design with support for protecting both kernel and user-space ASLR.
We integrate the kernel changes on \linux, and 
implement our microarchitecture changes on the gem5~\cite{Binkert:2011:gem5,Lowe-Power:2020:gem5-20} simulator. 
We show that our design introduces negligible performance overhead running the SPEC2017 IntRate benchmark~\cite{bucek2018spec} and the LEBench benchmark~\cite{zzrcxb-LEBench-Sim}.
\ifextendedversion
We provide a formal proof in Appendix~\ref{sec:proof} to show that \sys achieves a non-interference property to prevent attackers from distinguishing virtual memory layouts with different secret offsets.
\else
We provide a formal proof in a technical report~\cite{song-oreo-proof} to show that \sys achieves a non-interference property to prevent attackers from distinguishing virtual memory layouts with different secret offsets.
\fi

In summary, we make the following contributions:
\begin{itemize}
[leftmargin=*]
    \item We systematically analyze existing microarchitectural attacks that leak the ASLR secret offset and classify them into three leakage paths.
    \item We propose a software-hardware co-design mitigation to prevent leaking selected bits of the ASLR secret, innovating a new memory interface. 
    \item We prototype the software and hardware changes to support \sys on Linux and the gem5 simulator.
    \item 
    \ifextendedversion
    We provide security evaluation and a formal proof to show that \sys prevents leakage of the ASLR secret, and our performance evaluation shows \sys introduces negligible overhead.
    \else
    We provide security evaluation to show that \sys prevents leakage of the ASLR secret, and our performance evaluation shows \sys introduces negligible overhead.
    \fi
\end{itemize}

\section{Background}
\label{sec:background}

\subsection{Address Space Layout Randomization}
\label{sec:background:aslr}
Address Space Layout Randomization (ASLR)~\cite{team2003pax,edge2013kernel} 
is a widely deployed mitigation against memory corruption attacks.
The idea is to randomly relocate code or data regions so that the attacker has difficulty determining the addresses for specific instruction gadgets to construct reliable code-reuse attacks.
Figure~\ref{fig:kaslr-mem} describes how coarse-grained ASLR works.

Given the full virtual address space, ASLR selects a memory region to perform its random re-location.
In the whole virtual address space, multiple non-overlapping randomization regions exist for relocating different memory contents.
For example, Linux kernel uses a region to hold kernel text and a different region for kernel modules.
Figure~\ref{fig:kaslr-mem}(a) shows two randomization regions denoted as $[\codestart, \codeend)$ and $[\codestart', \codeend')$.

Within each randomization region, only a subset of address slots are used to hold code/data, thus considered valid addresses (the blue regions), while the rest are considered invalid.
The length of the valid region is usually significantly smaller than the total size of the randomization region, which is necessary for achieving a reasonable amount of entropy.
Address relocation shifts the valid region by an $\loadoffset$, which is the distance between the starting address of the valid region and the randomization region. This offset can be a public value when no ASLR is used (Figure~\ref{fig:kaslr-mem}(b), and must be chosen secretly when ASLR is in place (Figure~\ref{fig:kaslr-mem}(c)).

\begin{figure}[t]
    \centering
    \includegraphics[width=\linewidth]{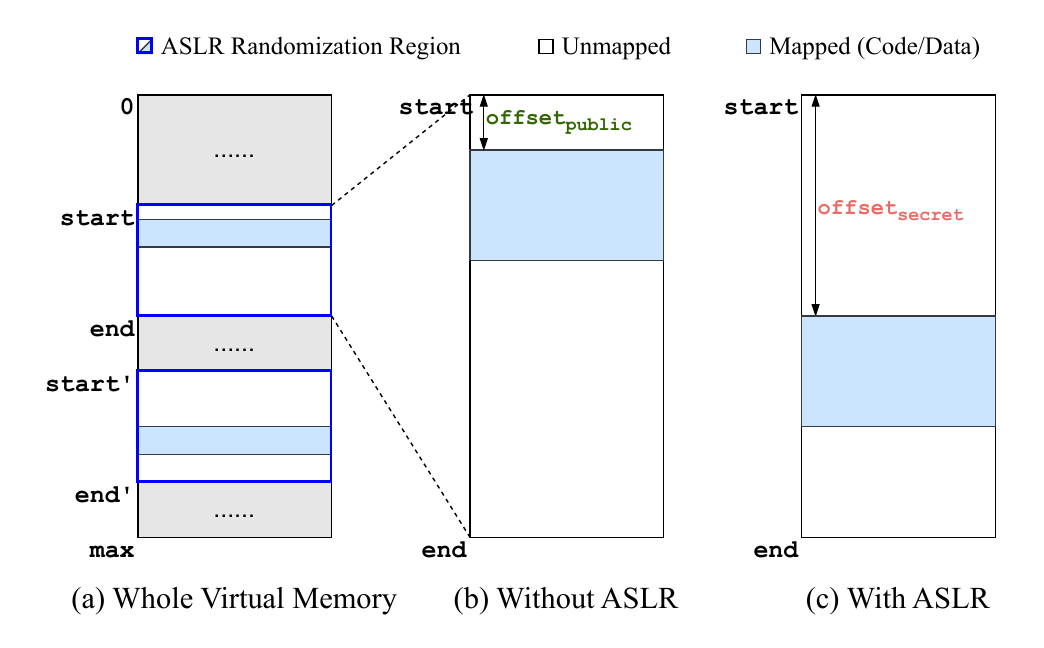}
    \caption{Coarse-grained ASLR.
    (a) shows the whole virtual address space with multiple randomization regions.
    (b) and (c) show memory content is loaded using a public offset when ASLR is disabled and a private offset when ASLR is enabled.}
    \label{fig:kaslr-mem}
\end{figure}

\subsection{Virtual Memory Systems}
\label{sec:background:memory_system}

Modern systems support virtual memory for the purpose of process isolation, programmability, and hardware abstraction.
With the virtual memory interface, the software does not directly operate on physical addresses backed by DRAM. 
Instead, the operating system abstracts DRAM by providing software with a large, contiguous, and unified virtual address space, and introduces a layer of indirection to translate every virtual address to its mapped physical address.

Linux and many existing operating systems use page-based address translation.
Taking the virtual page number (VPN) from a virtual address, \textit{a page table walk} translates the VPN to its corresponding physical page number (PPN).
Modern systems use hierarchical page tables to store the page table in a space-efficient manner.
The CPU looks up virtual addresses by traversing a tree structure from root to leaf, requiring multiple memory accesses.
Microarchitecture structures, such as translation lookaside buffers (TLBs) and page table caches, can buffer recently accessed translations to accelerate this procedure.

The virtual memory system enforces security checks upon every memory access leveraging protection bits embedded in page table entries (PTEs).
The check involves checking whether a virtual address is mapped or not. 
Accessing an unmapped address leads to looking up an invalid PTE entry and thus results in a page fault. 
The virtual memory security check additionally examines whether the access to the page has the correct permissions from an appropriate privilege level. 

\subsection{Microarchitectural Side Channel Attacks}
\label{sec:background:side_channel}

A microarchitectural side channel involves information leakage from a victim's security domain to an attacker's security domain.
The attacker exploits visible side effects of the execution of instructions whose behaviors are secret-dependent. 
We call such instructions \textit{transmitters}.
The transmitter leaves side effects by modifying the states or occupancy of various microarchitecture structures, such as caches~\cite{kiriansky2018speculative,Kocher2018spectre,koruyeh2018spectre,Lipp2018meltdown,maisuradze2018ret2spec,schwarz2019netspectre}, TLBs~\cite{gras2018translation,koschel2020tagbleed}, BTB~\cite{evtyushkin2016jump,chowdhuryy2021leaking}, and Network-on-Chips~\cite{dai2022don}.
Furthermore, recent microarchitectural attacks exploit power-induced timing leakage~\cite{wang2022hertzbleed}.

Speculative execution attacks, also referred to as transient execution attacks, are a class of information leakage attacks where attackers exploit the side effects of \textit{transient} instructions.
A transient instruction is an instruction that is speculatively executed on an out-of-order core but is later squashed due to misspeculation.
High-profile speculative execution attacks include Meltdown~\cite{Lipp2018meltdown}, Spectre~\cite{Kocher2018spectre}, and its variants~\cite{kiriansky2018speculative,koruyeh2018spectre,maisuradze2018ret2spec,schwarz2019netspectre}.

Given that modern processors have been aggressively optimized, an increasing number of microarchitectural side channels and speculative execution features have been revealed in the last few years.
We envision this trend will continue.
Furthermore, as the arms race continues,
due to performance overhead and hardware costs, it is unlikely that the industry will commit to delivering processors with comprehensive mitigations in the near future.
Hence, non-complete solutions that only block certain leakage channels are insufficient. 
This paper presents a mitigation that blocks microarchitectural-attack-assisted ASLR bypasses.
Our scheme works even if the processor exhibits vulnerable microarchitectural leakage channels and speculation features.

\section{Understanding ASLR Bypasses Using Microarchitectural Attacks}
\label{sec:attack-analysis}
ASLR can be bypassed by leaking the secret offset using either software attacks or microarchitectural attacks.
In this paper, we focus on microarchitectural-attack-assisted ASLR bypasses because most of these attacks do not need to exploit software bugs and can universally work with a wide range of commercial processors.
In fact, these attacks have been put into use in real-world software exploits~\cite{ravichandran2022lord,Siguza2017IOHIDeous,jenkins2022exploiting}. 

We systematically analyze the past microarchitectural-attack-assisted ASLR bypasses and find there exist multiple possible leakage paths.
As the ASLR secret is spread throughout the system, it unavoidably exposes a large attack surface.
Specifically, the secret offset determines the virtual memory layout, meaning it determines which regions are
mapped and which are not.
The secret offset is also embedded in code pointers, which are extensively used by the processor to fetch instructions when executing a victim program.

Figure~\ref{fig:aslr-attack-graph} summarizes these attack surfaces.
From top to bottom, we show where the secret offset is located (the top row), what attacker operations (the second row) can be used to trigger a secret-dependent transmission operation (the third row), and what microarchitectural side channels the transmission operation can modulate (the bottom row).
Overall, the figure shows three leakage paths.
Moreover, the microarchitectural channels (structures) that can be used to leak the secret are diverse, indicating blocking each single side channel is not a promising strategy to defend against ASLR bypass attacks.
A detailed enumeration of existing attacks is provided in Appendix~\ref{sec:full-attack-summary}, including the leakage paths they take and the utilized side channels.

\begin{figure}
    \centering
    \includegraphics[width=\linewidth]{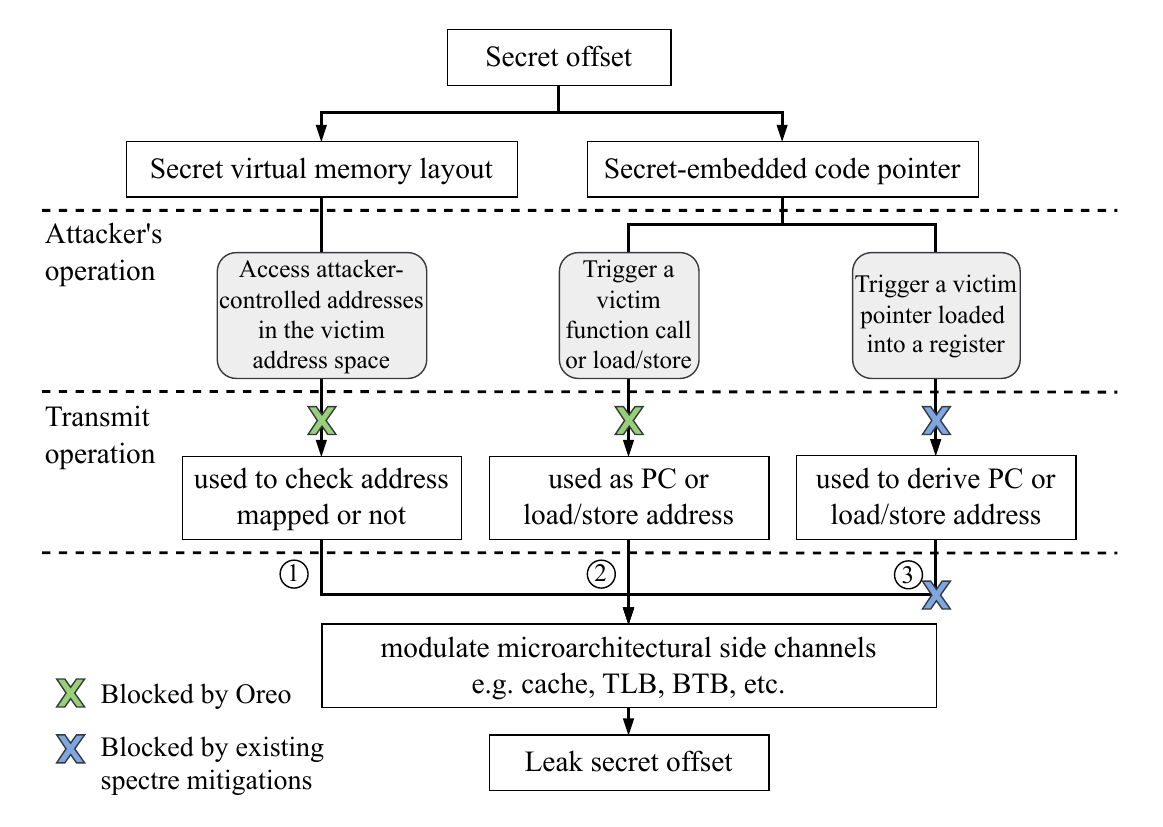}
    \caption{Three leakage paths that leak ASLR secret offset}
    \label{fig:aslr-attack-graph}
\end{figure}

\pgheading{Virtual Memory Layout Probing}
The first leakage path (the left path in Figure~\ref{fig:aslr-attack-graph}) shows an attacker probes the virtual memory layout to figure out which region is mapped and which is unmapped~\cite{lipp2022amd,goktas2020speculative,hund2013practical,jang2016breaking,wikner2023phantom,canella2020kaslr,weber2021osiris,schwarz2019store,canella2019fallout}.
The attacker's operation involves triggering the processor to access an attacker-controlled address in the victim address space.
The processor needs to consult several microarchitecture structures and perform a page table walk if needed to determine whether the address is mapped or not, resulting in distinguishable microarchitectural side effects.
Usually, accessing unmapped addresses will result in system crashes or exceptions.
However, microarchitectural attacks can be stealthy since a clever attacker can suppress the crashes using speculation and other tricks, 
such as cache manipulation instructions (e.g., prefetch) and Intel TSX~\cite{intel2022intel}.

Multiple existing attacks fall into this category.
For example, as shown in the following code snippet, DrK~\cite{jang2016breaking} attacks kernel ASLR by probing each page in the kernel's randomization region from the user space and checking whether the page is mapped or not.
\inputminted{c++}{code/double_page_fault.cpp}
For each page, the attacker probes (e.g., issue a load) to a virtual address (denoted as $\texttt{guess\_addr}$) in that page two times. In both instances, the attacker uses Intel TSX to suppress page faults caused by failed permission checks.
Specifically, instead of informing the OS and causing a real crash, Intel TSX invokes a user-specified exception handler when a page fault occurs~\cite{intel2022intel}.
At the first probe, a page table walk (PTW) is triggered.
The PTW inserts a PTE entry into the TLB if the tested page is mapped and leaves the TLB unchanged otherwise.
The attacker then issues the second probe to test whether the address translation for $\texttt{guess\_addr}$ is cached in the TLB.
This test is performed by measuring how long it takes for the second probe to trigger a page fault that is then caught by the attacker's exception handler.

As shown above, attacks in this leakage path rely on distinguishing between mapped and unmapped addresses based on their microarchitectural side effects.
In addition to using TLB behaviors, prior work~\cite{canella2020kaslr, weber2021osiris, schwarz2019store, canella2019fallout} also discovered that pipeline behaviors differ for mapped and unmapped addresses.
Taking the Data Bounce attack~\cite{schwarz2019store, canella2019fallout} as an example,
they exploit the timing difference introduced by the store-to-load forwarding scheme, which is only triggered when the store address is mapped.

\pgheading{Leaking Pointers as Addresses}
The middle path in Figure~\ref{fig:aslr-attack-graph} shows how the secret offset embedded in a victim pointer can be leaked when it is used as an \textit{address} during the victim's execution.
Specifically, the pointer is used as the program counter or a load/store address.
Unlike the first leakage path, this attack vector does not require speculation or any crash suppression schemes. 
For example, to use a victim address as the program counter,
the attacker only needs to trigger the victim to do a function call \textit{non-speculatively}.
When accessing a secret-dependent address, various microarchitecture structures will be modulated, including BTBs (Jump Over ASLR~\cite{evtyushkin2016jump}), TLBs (TLBBleed~\cite{gras2018translation},  TagBleed~\cite{koschel2020tagbleed}, the Prefetch attack~\cite{gruss2016prefetch}, and EntryBleed~\cite{liu2023entrybleed}), and page table walkers (AnC~\cite{gras2017aslr} and Binoculars~\cite{zhao2022binoculars}).

We provide an example below to illustrate the AnC attack~\cite{gras2017aslr}, which uses cache Prime+Probe to leak kernel ASLR from user space.
After resetting the cache states, the attacker simply makes a system call, which triggers the victim (i.e., the kernel in this example) to fetch instructions using secret-dependent virtual addresses and the processor will trigger the page table walker (PTW) to translate these virtual addresses.
Note that, as the PTW uses secret bits of the virtual addresses to index into page tables and modulate the caches, the attacker can leak the secret offset by monitoring cache states. 
\inputminted[xleftmargin=\parindent]{c}{code/anc.c}

\pgheading{Leaking Pointers as Data}
The final leakage path (shown as the right path in Figure~\ref{fig:aslr-attack-graph}) describes how the secret embedded victim pointers can be leaked as data.
Different from the previous leakage path, the attacker needs to leverage a memory corruption vulnerability or transient out-of-bound memory access to load a victim pointer into a register and then use the secret bits in the pointer to compute an address.
This derived address is then leaked via side channels.

This leakage path usually requires a classic Spectre attack gadget or its variants~\cite{Kocher2018spectre,kiriansky2018speculative,koruyeh2018spectre,maisuradze2018ret2spec,schwarz2019netspectre}.
Such gadgets pose a serious threat as they can be exploited to leak arbitrary data in the victim's address space, not just pointers. 
As such, extensive work has been proposed to mitigate this threat.
For example, STT~\cite{yu2019speculative} and NDA~\cite{weisse2019nda} delay speculative execution of transmission instruction (e.g., load/store) if their operand holds speculative data.
Many other prior works~\cite{yan2018invisispec, ainsworth2021ghostminion,choudhary2021speculative,barber2019specshield,fustos2019spectreguard,koruyeh2020speccfi,schwarz2020context,yu2018data,daniel2023prospect,loughlin2021dolma,mosier2023serberus,yu2020speculative,ainsworth2020muontrap,khasawneh2019safespec,kiriansky2018dawg,li2019conditional,saileshwar2019cleanupspec,sakalis2019ghost,sakalis2019efficient} intend to hide side effects of speculative execution on the cache hierarchy and other structures.

\section{Threat Model}
\label{sec:threat_model}
We follow the threat model of microarchitectural side channel attacks, where the attacker and the victim reside in different security domains.
This setup includes the case when the attacker and the victim are two user-space processes, or the victim is a privileged software, such as an operating system kernel and hypervisor, while the attacker is a user-space application.
This also applies when the victim is an enclave program while the attacker is privileged software.
The attacker and the victim execute on the same machine, sharing various microarchitecture structures.
We broadly consider timing-based side channels due to resource contention and speculation.
Our threat model does not specifically consider power-induced timing side channels~\cite{lipp2022amd}, but we elaborate on how our scheme can help future mitigations in Section~\ref{sec:related}.

We set out to block the first two ASLR leakage paths in Figure~\ref{fig:aslr-attack-graph}, indicated by the two green crosses. 
We propose \sys, a software-hardware co-design scheme to prevent leakage through direct virtual memory layout probing and ensure secret-embedded pointers, when used as program counters or load/store addresses, remain indistinguishable to attackers. 
\sys is highly practical and can be adopted in real systems.

As a side note, substantial prior work~\cite{yu2019speculative,weisse2019nda,ainsworth2021ghostminion,choudhary2021speculative,barber2019specshield,fustos2019spectreguard,koruyeh2020speccfi,schwarz2020context,yu2018data,daniel2023prospect,loughlin2021dolma,mosier2023serberus,yu2020speculative,ainsworth2020muontrap,khasawneh2019safespec,kiriansky2018dawg,li2019conditional,saileshwar2019cleanupspec,sakalis2019ghost,sakalis2019efficient} has attempted to address the data leakage path (the third path).
These schemes can be used in complementary with \sys to serve as a comprehensive defense solution for achieving security goals beyond mitigating ASLR bypasses. 

\section{Design of \sys}
\subsection{Overview}
The goal of \sys is to protect selected randomized bits in virtual addresses from being leaked via microarchitectural attacks in the first two leakage paths in Figure~\ref{fig:aslr-attack-graph}.
We refer to these bits as \textit{microarchitecture oblivious bits} or \textit{protected bits} for short.
The protected bits in a valid randomized virtual address concatenated with trailing zeros form the secret offset. 
Without ambiguity, we refer to the secret offset protected by \sys as $\oreooffset$.
For example, given a valid randomized address \texttt{0xFF\textcolor{mygreen}{AB}12340}, if configuring the protected bits as bits $20$ to $27$, then the secret $\oreooffset$ is $\texttt{0x\textcolor{mygreen}{AB}00000}$.

\pgheadingnoperiod{How to Protect \OreoBits?}
We introduce a new memory interface, with an extra layer of masked address space that sits between the virtual address space and the physical address space.
A masked address is mapped from a virtual address with the \oreobits redacted.
As shown in Figure~\ref{fig:sys-mem-layout-detail}, multiple virtual addresses with different \oreobits are mapped to the same masked address.
\sys uses masked addresses to build and traverse page tables and access various microarchitecture structures, such as BTBs and TLBs.
This scheme ensures accessing a virtual address results in microarchitectural side effects independent from its protected bits. 

In addition, \sys changes the memory address security check flow.
Using masked addresses makes unmapped and mapped virtual addresses that only differ in \oreobits have exactly the same microarchitectural side effects, but we eventually need to distinguish between them to raise exceptions upon illegal memory accesses.
\sys performs this check at the commit time of instructions, forcing the distinguishability between these addresses to happen only at the architectural level rather than the microarchitectural level.

\pgheadingnoperiod{How to Choose Which Bits to Protect?}
There exist several constraints in choosing the bits to be protected by \sys.
For example, \sys's protection faces a security dilemma.
On the positive side, \sys prevents leaking $\oreooffset$ using microarchitectural attacks, strengthens the security of ASLR, and raises the barrier against control-flow hijacking attacks.
However, on the negative side, an attacker who aims to perform speculative execution attacks can execute a Spectre-like gadget under speculation without knowing the secret $\oreooffset$.
This is because accessing addresses only differing in the \sys protected bits speculatively results in the same microarchitectural side effects.

We present two strategies to determine which bits to be protected by \sys.
A naive strategy chooses to protect part of the default baseline ASLR randomized bits, and an enhanced strategy introduces additional entropy and protects bits that do not overlap with the randomized bits in the baseline.

\subsection{Protecting \OreoBits}
\label{sec:design-protect}
\begin{figure}
    \centering
    \includegraphics[width=\linewidth]{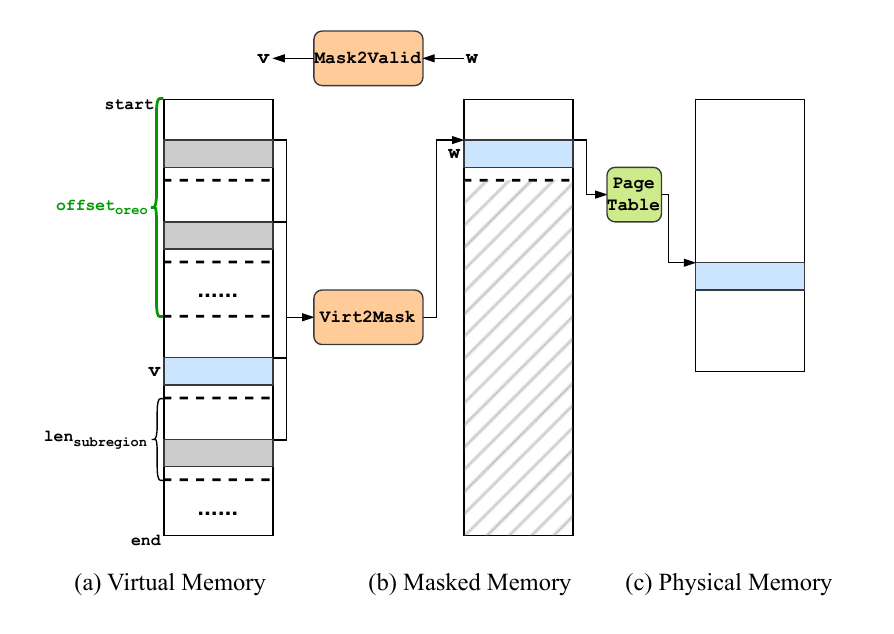}
    \caption{Mapping between a virtual address $v$ and a masked address $w$ using \sys's memory interface.}
    \label{fig:sys-mem-layout-detail}
\end{figure}

\pgheading{Masked Address Space}
\label{sec:masked-address-space}
\sys introduces the masked address space and maps multiple virtual addresses to the same masked address.
Figure~\ref{fig:sys-mem-layout-detail} shows the notation we use to describe the masked address space layout.
Given a memory range to be used for ASLR denoted as $[\codestart, \codeend)$, we divide this range into multiple subregions with equal size denoted as $\codealign$, marked by the dotted lines in Figure~\ref{fig:sys-mem-layout-detail}(a).
Addresses in the $i$-th region have their \oreobits equal to $i$, and the goal is to prevent these bits from being leaked.

\sys maps each subregion to the very first subregion in the masked address space.
Given a valid address $v$, the corresponding masked address can be calculated as:
\begin{equation*}
        \addrmask(v) = ((v - \codestart) \bmod \codealign) + \codestart.
\end{equation*}
We note two things from the formula above.
First, the calculation of the virtual-to-masked memory mapping, as well as the masked-to-physical mapping, are independent of \oreobits.
Second, it is simple enough to be implemented efficiently in hardware with low cost.

After the address conversion is done, \sys needs to set up page tables to map masked addresses instead of virtual addresses. 
We discuss required kernel changes in Section~\ref{sec:implementation-memory-interface}.
In addition, any microarchitecture structures that use virtual addresses as indices will be modified to use masked addresses, as detailed in Section~\ref{sec:hw-changes}.

\pgheading{Security Check Flow}
\sys forces the virtual addresses, as long as they map to the same masked address, to be only distinguished at the architectural level after the instruction commits.
In other words, accessing unmapped addresses eventually results in exceptions upon instruction commits.
We clarify how \sys's security check flow interacts with the existing virtual memory security check.

\sys's security check flow involves two steps.
The first step is performed during speculation and is exactly the same as the existing virtual memory security check, with the only difference being performing the check upon masked addresses rather than virtual addresses.
Recall from Section~\ref{sec:background:memory_system}; the existing virtual memory security check examines whether the accessed page is mapped as well as whether the access has the correct read/write/execute permissions from an appropriate privilege level.
This check is performed during TLB accesses or page table walks.
\sys applies this check on masked addresses during speculation so that this check is independent from the \oreobits.

The second step of the security check happens at the commit time.
The task is to check whether the virtual address has the correct \oreobits.
This can be done by reconstructing the valid virtual address and checking whether the virtual address used by the committing instruction is equal to the valid virtual address.
The following formula reconstructs the valid virtual address from a masked address $w$.
\begin{equation*}
    \addrvirt(w) = \outeroffset + w.
\end{equation*}
This formula indicates that we need to store the secret $\oreooffset$ somewhere to be readily used upon instruction commit time.
In our implementation, we store this information in the unused bits in page table entries~\cite{intel2022intel}, which we detail in Section~\ref{sec:implementation-memory-interface}.

\pgheading{Blocking Leakage Paths}
We show below how \sys blocks the first two leakage paths using concrete examples. Formal reasoning is presented in Appendix~\ref{sec:proof}.
Consider the following example where a program is loaded at a random address \texttt{0xFF\textcolor{mygreen}{AB}12340}.
If \sys intends to protect $\oreooffset=\texttt{0x\textcolor{mygreen}{AB}00000}$, the corresponding masked address can be computed as \texttt{0xFF0012340}.

In the first leakage path, the attacker probes the victim's virtual address space by issuing speculative memory access operations targeting different victim addresses and expecting to distinguish between mapped and unmapped addresses based on their microarchitectural side effects.
For example, when performing the Double Page Fault attack~\cite{hund2013practical} and Code Region Probing attack~\cite{goktas2020speculative} on the baseline insecure system, if the attacker accesses any address with incorrect \oreobits (e.g., $\texttt{0xFF\textcolor{red}{AA}12340}$), the TLB will not be filled for this invalid address translation, and the caches will not be filled with such an invalid address.
This differs from the microarchitectural behaviors when accessing the valid address.

Using masked addresses allows \sys to have the exact same microarchitectural effects when accessing addresses that differ in \oreobits for two reasons.
First, on \sys, virtual addresses that fall within the randomization will be converted to masked addresses before address translation and memory accesses.
Both the invalid address $\texttt{0xFF\textcolor{red}{AA}12340}$ and the valid one \texttt{0xFF\textcolor{mygreen}{AB}12340} will be converted to the same masked address \texttt{0xFF0012340}.
In both cases, the TLB will be filled with a valid address translation, and a valid cache entry located using the translated physical address will be inserted into the cache hierarchy.
Second, \sys delays the check until the commit time to determine whether an address is mapped or not.
Recall that, as the attacker aims to suppress potential exceptions using pipeline squashes, none of the attacker's probing operations will reach the commit stage.
As a consequence, they follow the same security check under speculation and their microarchitectural effects will not be distinguishable on \sys.

Reasoning about how \sys blocks the second leakage path is straightforward.
Recall that in the second leakage path, the attacker triggers a victim functional call to make the victim branch to a valid virtual address (e.g., \texttt{0xFF\textcolor{mygreen}{AB}12340}).
In the insecure baseline, the secret bits \texttt{0x\textcolor{mygreen}{AB}} are used as part of the address to locate entries in TLBs, BTBs, and caches. 
While, \sys redacts the secret bits from the virtual address to obtain the secret-free masked address \texttt{0xFF\textcolor{mygreen}{00}12340}.
Furthermore, all the microarchitecture structures that used to take virtual addresses as input now switch to using masked addresses, leading to no side effects related to the secret $\oreooffset$, so that \sys successfully blocks the second leakage path.

\subsection{Choosing Bits to Protect}
\label{sec:design-dilemma}
\begin{figure*}[t]
    \centering
    \includegraphics[width=\linewidth]{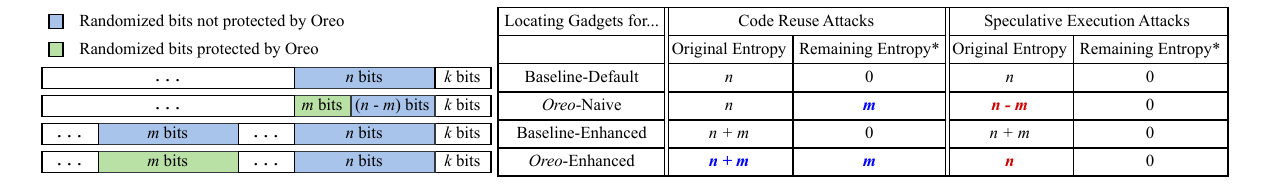}
    \caption{Protected bits selection strategies and their corresponding entropy. ``Remaining entropy*'' refers to ``remaining entropy after ASLR bypasses using the first two leakage paths''}
    \label{fig:design-addr-para}
\end{figure*}

\pgheading{Constraints}
Before presenting our strategies in choosing the bits to be protected by \sys, let's first understand what constraints we have.

First, the least significant \oreobit is constrained by the subregion size used by \sys.
Recall from Figure~\ref{fig:sys-mem-layout-detail}, \sys divides the memory range to be used by ASLR into multiple equally-sized subregions and maps these subregions to a single subregion in the masked address space.
It is required that only one of the subregions in the virtual address space is valid.
As such, we need to ensure the subregion size $\codealign$ to be large enough to hold the code or data that is being relocated.
For example, the Linux kernel text is $2^{25}=\SI{32}{MB}$, and thus the least significant bit we can pick to be protected by \sys is bit $25$.
The baseline ASLR does not have this constraint since the valid region can be shifted at the page granularity.
The least significant bit to be randomized is bit $12$ if using $\SI{4}{KB}$ pages or bit $21$ if using $\SI{2}{MB}$ pages.

Second, we face a security dilemma.
\sys makes mapped and unmapped addresses to be indistinguishable at the microarchitectural level if they only differ in \oreobits.
The problem is that if attackers aim to conduct speculative execution attacks, they can utilize Spectre-like gadgets without knowing the correct \oreobits in the gadgets' randomized virtual addresses.
The attack can succeed because accessing virtual addresses with incorrect \oreobits has the same microarchitectural effects during speculation as accessing the one with the correct bits.

If we increase the bits to be protected by \sys, we increase the system's resilience against ASLR bypass attacks and thus make control-flow hijacking attacks, such as ROP attacks, more difficult.
However, we may face the risk of decreasing the time it takes an attacker to locate and exploit Spectre gadgets.
Therefore, we need to choose which bits to protect carefully to optimize for the entropy we can achieve against both attacks.

\pgheading{Bits Selection Strategies}
Figure~\ref{fig:design-addr-para} lists the ASLR randomized bits and \oreobits in four different setups,
including the default baseline ASLR used in Linux, a naive \sys bits selection strategy, an enhanced baseline with additional randomized bits, and an enhanced \sys selection strategy.
In the baseline ASLR setup used by Linux (top row in the figure), the $n$ bits in the virtual address, colored in \textcolor{myblue}{blue}, are randomized, and the lower $k$ bits are not randomized. 

The naive strategy is to choose part of the ASLR secret bits as the \oreobits.
For example, we can choose the higher $m$ bits (colored in \textcolor{mygreen}{green}) of the ASLR secret bits, so the least significant bit protected by \sys is the $(k+n-m)$th bit.
Due to the subregion size constraint discussed before, we have to ensure $m$ is small enough so that we have $2^{k+n-m}$ to be at least as large as the valid region size.
The concern with this strategy is when the value of $m$ is too small, \sys has limited entropy towards mitigating ASLR bypasses.
Besides, it is not effective in addressing the speculative execution security dilemma discussed above either.

We address the above constraints by proposing an enhanced bits selection strategy, which introduces extra entropy in addition to the entropy of the default ASLR.
For a fair comparison, we present an enhanced baseline (the third row) and the enhanced \sys strategy (the bottom row) in Figure~\ref{fig:design-addr-para}.
The enhanced baseline additionally randomizes the higher $m$ bits in the virtual address to the left of the $n$ bits already randomized by the default ASLR.
\sys's enhanced bits selection strategy randomizes the same $m+n$ bits as the enhanced baseline, but protects the higher $m$ bits (marked as \textcolor{mygreen}{green}).
In this case, we no longer need to be concerned with the subregion size constraint since the least significant bit protected by \sys is not smaller than $(k+n)$.
$2^{k+n}$ is the default ASLR randomization region and is deemed to be larger than the valid region size.

\pgheading{Entropy Analysis}
We summarize the entropy comparison of these schemes in Figure~\ref{fig:design-addr-para}.
We show the entropy of each scheme against locating gadgets to be used in code reuse attacks and speculative execution attacks.
We then compare the original entropy of each scheme with their remaining entropy after the attacker performs any of the microarchitectural-based ASLR bypasses in the first two leakage paths in Section~\ref{sec:attack-analysis}.
Note that blocking speculative execution attacks is out of the scope of our threat model; rather, we want to ensure \sys does not make this type of attack easier.

We begin by clarifying the security implications of the original entropy.
It is desired to retain the original entropy of \sys to match the corresponding baseline ASLR.
In this way, we force the attacker to pay extra effort to perform ASLR bypass attacks.
However, further increasing the original entropy does not help strengthen the security of ASLR against microarchitectural attacks. 
Several attacks in leakage path \circled{2}, such as the AnC attack~\cite{gras2017aslr} and Binoculars~\cite{zhao2022binoculars}, directly leak all ASLR randomized bits.
Importantly, their leakage time is independent of the number of bits to be leaked.
As shown in Figure~\ref{fig:design-addr-para}, though the enhanced baseline has higher original entropy than the default baseline, they both have no resilience against microarchitectural-attack-assisted ASLR bypasses and have $0$-bit remaining entropy against locating gadgets for both code reuse attacks and speculative execution attacks.

With the naive \sys bit selection strategy, we obtain $m$-bit entropy against gadget detection for code reuse attacks.
The original entropy of gadget detection for speculative execution attacks is reduced from $n$ bits to $n\!-\!m$ bits since the attacker does not need to know the \oreobits.
Using the enhanced bits selection scheme, we have $m$-bit remaining entropy against gadget detection for code reuse attacks and $n$-bit original entropy against gadget detection for speculative execution attacks.

To summarize, introducing extra randomized bits that are not protected by \sys does not gain security against microarchitectural attacks.
Even though the enhanced baseline provides $m$ more bits of original entropy compared to the default baseline and the \sys enhanced strategy, it increases little security guarantee against speculative execution attacks.
Overall, the \sys enhanced bits selection strategy achieves the best security property,
effectively increasing the barrier against control-flow hijacking attacks and retaining the barrier against speculative execution attacks.

\pgheading{Feasibility and Linux Prototyping}
It is feasible to adopt the enhanced \sys bit selection strategy in existing systems.
We have implemented this strategy in our Linux prototype for kernel text, kernel modules, and user-space programs.

For kernel text, Linux's default configuration relocates the code within a 1GB region using 2MB as the relocation alignment.
As such, the default ASLR randomizes bits $21$ to $29$, providing a $9$-bit entropy.
Similarly, ASLR relocates kernel modules within a 1GB region that is consecutive to the kernel text randomization region and uses 4KB as the alignment size, so bits $12$ to $29$ are randomized. To reserve large enough memory for kernel modules, the default ASLR allows $1024$ possible offsets for relocation, providing $10$-bit entropy.

Following the enhanced strategy in Figure~\ref{fig:design-addr-para}, we need to randomize additional bits higher than the ASLR secret bits.
According to the Linux kernel memory management documentation~\cite{kernelmemorymanagement}, there exists a consecutive $\SI{444}{GB}$ unused region that can serve our randomization goal. 
We use this region for both kernel text and modules, so each of them can use $\SI{222}{GB}$.
We configure \sys to protect bits 31 to 38 (8 bits in total) for both the kernel text and kernel modules.

We can also apply the enhanced selection strategy to kernel data regions and user-space memory.
For example, the default user-space ASLR can use the whole user-space virtual address space as the randomization region with a granularity of $\SI{4}{KB}$ and it provides a high entropy of $28$ bits.
To maximize applicability, we do not want to reduce the available virtual memory size for the user space.
Therefore, we choose non-canonical bits as \oreobits.
Considering a memory system using 4-level page tables, a canonical address is derived by taking a 48-bit virtual address and sign-extending it to form a 64-bit address.
Bits 48 and above are not used in the baseline, so using them as \oreobits 
will not affect available memory size.

\subsection{Further Increasing Entropy of \sys-Protected Bits}
\label{sec:design-fine-grained}
So far, we have shown \sys works with coarse-grained ASLR, where the addresses from the same valid region (the blue region in Figure~\ref{fig:sys-mem-layout-detail}(a)) share the same secret offset.
However, one limitation of coarse-grained ASLR is that leaking one pointer breaks the whole defense. 
We now examine how to further increase the entropy of \sys-protected bits.

\pgheading{Working with Existing FGASLR}
The existing fine-grained ASLR randomizes the memory layout by (1) shuffling the order of functions inside the program and (2) relocating the program by a random offset (same as the coarse-grained ASLR). 
\sys can only protect the random offset in (2) but cannot prevent leaking the order of functions after shuffling.

\pgheading{Supporting Page-Granularity ASLR}
An appealing feature of \sys is its capability to conveniently support page-granularity ASLR to improve its safety level.
Figure~\ref{fig:page-level-fine-grained} shows a configuration where we have four pages in the valid region.
Figure~\ref{fig:page-level-fine-grained}(a) describes when using coarse-grained ASLR, the four pages use the same $\outeroffset$ to be relocated to the third subregion in the virtual address space.
Alternatively, Figure~\ref{fig:page-level-fine-grained}(c) shows the case when the four pages are relocated to different subregions in the virtual address space while still mapped to non-overlapped pages in the masked address space.

We now analyze the $\addrmask$ and $\addrvirt$ functions to highlight that minimal changes are needed to make \sys support page-granularity ASLR.
First, the coarse-grained and the fine-grained schemes use the same $\addrmask$ function.
Second, the $\addrvirt$ function only differs slightly as the page-granularity scheme needs to configure different $\outeroffset$ for different pages, which means storing different $\outeroffset$ values in the corresponding PTE entries.

Configuring \sys is handy if page-granularity ASLR is in place.
However, we acknowledge there exist engineering challenges to implementing a proper code relocator for such a randomization scheme in software.

\begin{figure}[t]
    \centering    
    \includegraphics[width=\linewidth]{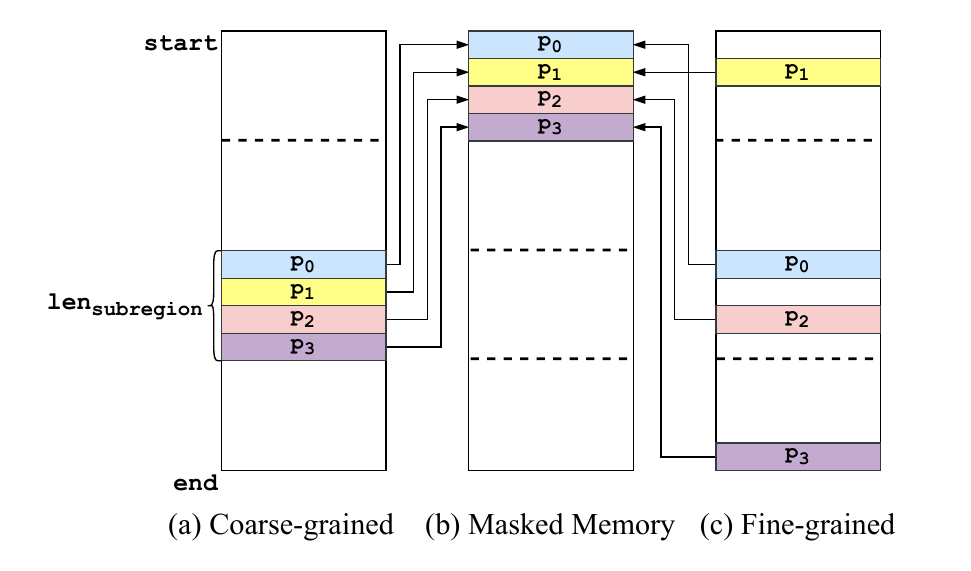}
    \caption{Compare \sys's randomized memory layout for coarse-grained ASLR (left) and page-granularity ASLR (right)}
    \label{fig:page-level-fine-grained}
\end{figure}

\subsection{Limitations}
\label{sec:limitations}
While \sys offers substantial improvements in ASLR security, it also has several limitations that must be considered.
In this section, we summarize these limitations to help understand the trade-offs involved in applying \sys.

\pgheading{Security Dilemma}
In Section~\ref{sec:design-dilemma}, we have detailed the security dilemma problem and how we address it by carefully selecting bits to protect.
Basically, on the one hand, \sys successfully protects ASLR randomized bits against the first two leakage paths and makes control-flow hijack attacks more difficult. 
On the other hand, the protected bits cannot prevent attackers from utilizing Spectre gadgets since they can speculatively execute the gadgets without knowing these bits.
When applying \sys, we need to take this dilemma into consideration and carefully choose bits to protect so that the overall system achieves a satisfactory security guarantee against both attacks.

\pgheading{Using Non-Canonical Bits or Canonical Bits}
When applying \sys to user-space programs, we choose \oreobits from non-canonical bits.
We see no problem in applying such a scheme in systems using 4-level page tables where we implement our prototype, but it may introduce limitations due to reduced non-canonical bits in other system configurations.

For example, given a system using 5-level page tables and 57-bit virtual addresses, \sys can use no more than $7$ bits as \oreobits.
The non-canonical bits might also be used for other purposes such as ARM PAC~\cite{inc2017armpac},
which further limits the entropy. 
To achieve higher entropy, we can use canonical bits as \oreobits for user-space programs, following a similar configuration used by \sys for kernel text and modules.

However, using canonical bits faces another constraint due to the program's code or data sizes.
As analyzed in Section V-C, the least significant \oreobit is constrained by the subregion size ($\codealign$) used by \sys, which must be large enough to hold the relocated code or data.
Different programs have varying code and data sizes.
Enforcing a uniform configuration for all user-space programs to accommodate the largest programs would unnecessarily limit the entropy for smaller programs.
Instead, \sys can adopt a more flexible approach.
Specifically, $\codealign$ can be adjusted based on the size of each program, allowing \sys to protect more bits and achieve higher entropy for smaller programs.

\pgheading{Using PTE Bits}
As \sys stores the protected bits in PTEs, the maximum entropy of the protected bits is limited by the number of unused bits in PTEs. 
For example, PTE bits can be used by other system software and security mechanisms such as MPK~\cite{kernelmemoryprotectionkeys}.
In our prototype, we have already considered these factors and used PTE bits that do not overlap with MPK bits and software-available bits.
Specifically, our prototype (Section~\ref{sec:implementation-memory-interface}) uses leaf PTEs to store \oreobits, which allows at most 9 bits for the kernel space and 5 bits for the user space.
If future software uses more leaf PTE bits, \sys can use the unused bits in other levels of PTEs.

\section{Implementation Details}
\label{sec:implementation-details}
\subsection{Software Changes}
\label{sec:implementation-memory-interface}
\sys's software changes require modifying the page table to use masked addresses.
We prototype the software changes in Linux kernel.
Our current prototype supports three types of memory regions: the kernel text, kernel modules, and user-space programs. 
It is also feasible to adopt \sys for other memory regions, including kernel data regions.

For all the randomization regions, we set up the page table to map masked addresses to physical addresses and record the secret $\outeroffset$ in PTE entries.
Our implementation is compatible with the existing Linux kernel implementation.
The implementation for the three regions slightly differs on 1) which bits are selected to be \oreobits, and 2) when a page mapping is set up.

\pgheading{Kernel Text}
Linux kernel uses 21 to 29 as the ASLR randomized bits.
Our prototype uses the enhanced bits selection strategy (Section~\ref{sec:design-dilemma}) to additionally randomize and protect bits 31 to 38, i.e., the \oreobits.
We store these bits at the PTE entry for each kernel text page.

Several Linux-specific implementation details are worth mentioning. 
First, Linux sets up the page table for kernel text at boot time.
Therefore, our prototype directly integrates our changes at the boot code.
Second, since our prototype uses $8$ \oreobits and the x64 architecture reserves $9$ unused bits in leaf PTEs~\cite{accardi2020finer}, our changes to the PTE entries are compatible with existing implementations without affecting the current page tables' functionality. 

\pgheading{Kernel Modules}
Similar to kernel text, we choose bits $31$ to $38$ as the \oreobits for kernel modules.

We note one implementation detail specific to the current implementation of Linux.
In our prototype, we use the same $\oreooffset$ for kernel modules and kernel text.
This differs from the baseline ASLR, where the kernel modules can have different entropy from kernel text.
The reason is that the current Linux implementation requires the kernel text and modules located in a $\SI{2}{GB}$ consecutive region, so bits $31$ to $38$ (the \oreobits) in their virtual addresses need to be the same.
If we want to support different entropy for kernel text and modules, we will need to relax this constraint by leveraging prior work,  such as Adelie~\cite{nikolaev2022adelie}, which allows relocating kernel modules in the whole 64-bit address space.

We make the following changes to the page setup procedure for kernel modules.
Linux allocates memory for a module when the module is loaded into the kernel.
The virtual memory allocation triggers page table entry setup.
Additionally, Linux builds a red-black tree to manage memory for these modules.
In the default implementation, each module's randomized virtual base address is used as the key to construct and search the tree, which introduces a side-channel vulnerability. 
We change the kernel to allocate and manage memory using masked addresses. 
In addition to properly setting the page table entries, we also use modules' masked base addresses to build the red-black tree.

\pgheading{User-Space ASLR}
As discussed in Section~\ref{sec:design-dilemma}, we use the non-canonical bits as \oreobits for user-space applications.
Theoretically, we have $16$ non-canonical bits.
In our prototype, we use bits $48$ to $52$, providing $5$-bit entropy, which matches the number of unused bits in user-space leaf PTEs (rather than $9$ unused bits in kernel leaf PTEs).
We note that this is an engineering decision made for convenience.
It is feasible to increase the entropy by storing the extra \oreobits in the PTEs of the other levels of page tables.
As each user process has its own page tables, we can configure different processes to use different \oreobits and store these bits in the per-process page tables.

\subsection{Microarchitecture Changes}
\label{sec:hw-changes}
One of the central ideas in \sys is to limit the usage of secret-dependent randomized virtual addresses in microarchitecture structures.
We modify the processor pipeline to extensively use masked addresses except for the virtual address security check at the commit stage. 
We  divide the pipeline into three components: the frontend for fetch and decode, the middle component for execute and memory operations, and the backend for committing instructions.
In Figure~\ref{fig:design-uarch}, we use different colors to indicate the usage of different types of addresses in each microarchitecture structure: red for secret bits and secret-dependent virtual addresses, green for secret-free masked addresses, and blue for secret-free physical addresses.

\begin{figure*}[t]
    \centering
    \includegraphics[width=\linewidth]{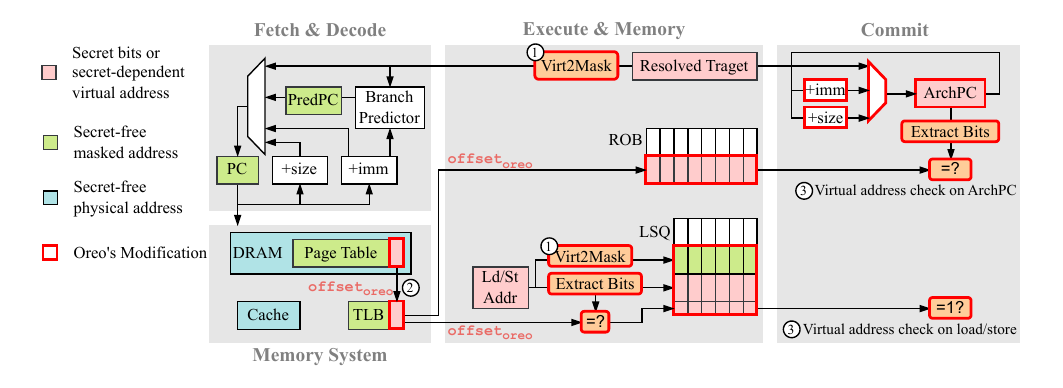}
    \caption{Microarchitecture changes required by \sys.}
    \label{fig:design-uarch}
\end{figure*}

\pgheading{The Fetch \& Decode Stage}
The frontend fetch stage maintains a PC (program counter) register and the branch predictor.
\sys uses the masked address in both the PC register and the branch predictor, requiring no changes to the fetch stage.
To understand why, consider the four sources that the hardware uses to update the PC register:
(1) PC + instruction size for non-branch instruction;
(2) PC + an immediate value from the decode stage for relative direct branches such as \texttt{jmp short};
(3) target address predicted by the branch predictor;
and (4) target address from the register file or memory, usually for indirect jumps, such as \texttt{call} and \texttt{ret}.

Among the four sources above, only the last type obtains the branch target from other pipeline stages, namely the execute \& memory stage.
We place a $\addrmask$ module between the execute \& memory and the fetch stages to ensure the external PC update source uses masked addresses.
As such, all the internal updates of the fetch stage, including the PC-derived targets and predicted targets, will consistently use masked addresses without extra intervention.

\pgheading{The Execute \& Memory Stage}
The middle component of a speculative processor uses a ROB to track all the in-flight instructions, and a load/store queue (LSQ) to track all the in-flight load and store operations.

We extend the ROB (reorder buffer) to store the correct \oreobits (i.e., $\oreooffset$) for the PC of each instruction to facilitate \sys's security check at the commit stage.
Specifically, when fetching instructions using a masked address, the address translation procedure looks up TLBs or performs a page table walk to obtain its physical address.
As we store $\oreooffset$ in PTE entries and the TLB, the translation procedure can obtain $\oreooffset$ and send it back to the core.

In this pipeline stage,
we also prevent information leakage caused by load and store instructions.
When inserting each load and store instruction into the load/store queue (LSQ), we convert the virtual address used by the load/store instruction into a secret-free masked address and also extract the \oreobits.
Memory dependency checks and other microarchitectural optimizations, such as load-to-store forwarding and address prediction, use masked addresses as inputs.
When a load/store is issued to the memory system, similar to instruction fetch, the address translation procedure uses the masked address.
For each load/store instruction, the memory system returns data (if it is a load) and the correct $\oreooffset$ for the given masked address.
This correct $\oreooffset$ is then compared against the extracted \oreobits in the LSQ entry to determine whether the original load/store virtual address is valid. 
The comparison result is stored in the corresponding LSQ entry.
Note that, this operation is secure because this pre-computed check result does not affect any other microarchitecture states, will only be used in the commit stage, and thus does not introduce new timing side channels.

\pgheading{The Commit Stage}
In the baseline hardware, security checks on memory accesses are performed during the address translation within the MMU.
When an instruction enters the commit stage, the hardware checks whether it should trigger an exception (e.g., failing to pass a security check) and retires the instruction if no exception occurs.
In \sys, MMU only performs page permission checks on masked addresses during speculation, i.e., whether the masked address has the correct read/write/execute permission on a given page.
Notably, the MMU leaves the check of whether the \oreobits of the virtual address is correct to the commit stage.

We perform two virtual address checks in the commit stage depending on the instruction types.
For any type of instruction, we check whether the \oreobits of its PC is correct.
For load/store instructions, we additionally perform this check on the load/store addresses.
As we have pre-computed the check result on load/store addresses in the execute \& memory stage and stored the check result in the LSQ, we can directly use this check result in the commit stage.

To check the \oreobits in PCs, we add another PC register to the commit stage.
This PC register keeps track of the virtual address for the instruction at the head of ROB.
To distinguish from the PC register in the fetch stage, we call it \texttt{ArchPC}, since the PC register at the fetch stage is \textit{speculative}, while the one at the commit stage keeps track of the PC to be committed, reflecting architecture-level information.
Since we inserted the correct $\oreooffset$ into the ROB in previous stages, we can conveniently check the validity of \texttt{ArchPC} by extracting \oreobits of the \texttt{ArchPC} and comparing them with the correct bits.
If the \texttt{ArchPC} is valid, we can retire the instruction. Otherwise, an exception is triggered for a virtual address check failure.
Note that we always let the other exceptions take priority over this virtual address check results to avoid leaking ASLR secrets to attacks using microarchitectural replay attacks~\cite{skarlatos2019microscope}.

Furthermore, the commit stage replicates some of the next-PC computation logic similar to the fetch stage.
As shown in Figure~\ref{fig:design-uarch}, this replicated part involves integer addition operators, data forwarding from the execution stage, and a multiplexer.
After successfully retiring an instruction, we use the next-PC logic to derive the architecture PC value of the next instruction and update the \texttt{ArchPC} register.
The design above is a naive baseline to showcase that \sys's hardware modification is not intrusive.
There exist plenty of optimization opportunities to omit these replicated computation structures or move it to an earlier stage of the pipeline.

\subsection{Timing and Hardware Cost Analysis}

We provide a comprehensive analysis to estimate the timing and area overhead.
We acknowledge that a synthesizable implementation of the hardware components of \sys can offer a more accurate measurement.

\pgheading{Pipeline Timing Impacts}
The hardware changes introduced by \sys include:
(1) applying $\addrmask$ to virtual addresses to obtain masked addresses;
(2) obtaining $\oreooffset$ from page tables/TLBs;
(3) checking whether a virtual address has valid \oreobits at the commit stage.
We labeled these changes in Figure~\ref{fig:design-uarch}.

First, the $\addrmask$ operation is performed before branch squashing and issuing load/store requests, labeled as \circled{1} in Figure~\ref{fig:design-uarch}.
Given there can exist multiple randomization regions,
when converting a virtual address to a masked address,
we first determine which randomization region the input virtual address falls into.
This operation requires parallel comparisons between the virtual address and the boundaries of each randomization region.
Once the randomization region is determined, we obtain the information of which bits are \oreobits.
We can then clear these bits using a quick bit-wise \texttt{AND} operation to obtain the masked address.
Overall, it is a lightweight operation consisting of parallel comparisons and bit-wise \texttt{AND} operations. 
Hence, we count no extra cycles introduced by $\addrmask$.

Second, \sys obtains secret ASLR offsets from page tables to fill in TLB entries and further passes them to the ROB and the load/store unit,
labeled as \circled{2} in Figure~\ref{fig:design-uarch}.
These operations are performed in parallel with address translation and do not introduce extra latency.

Third, \sys extracts \oreobits in the \texttt{ArchPC} register and check whether they match the correct bits or not, labeled as \circled{3} in Figure~\ref{fig:design-uarch}.
We count no extra latency or back-pressure at commit time because the information used for the virtual address check on the instruction PC is available at the execution stage, and the logic to do the check is a simple combinational circuit.
Furthermore, the virtual address check results for load/store addresses are pre-computed before commit time as discussed in Section~\ref{sec:hw-changes}.

\pgheading{Area Cost}
We analyze the area cost of \sys's hardware by visiting each block in Figure~\ref{fig:design-uarch}.
To begin with, no extra hardware is needed in the fetch stage as this stage purely operates on masked addresses.

In the memory system, \sys includes additional bits to record the correct $\oreooffset$ for each TLB entry.
Specifically, in our prototyped system, we add $8$ extra bits per TLB entry.
For reference, a Mega BOOM processor~\cite{zhaosonicboom} has $584$ TLB entries (including iTLB, dTLB, and L2TLB), leading to an overhead of $584$ bytes.

In the execute and memory stage of the pipeline, \sys incorporates the $\addrmask$ and bits extraction modules, as well as extra bits in the ROB and LSQ. 
First, both the $\addrmask$ module and the bits extraction module need to
store metadata of randomization regions, including the boundary and \oreobits information. 
For each randomization region, we use $128$ bits to store its boundary (i.e., $\codestart$ and $\codeend$), and a $64$-bit vector to indicate which bits are \oreobits.
Our prototype implementation uses two randomization regions to protect the kernel and user ASLR, which yields a total $384$-bit overhead.
Second, \sys introduces extra fields in the ROB and LSQ to facilitate virtual address checks on instruction PCs and load/store addresses, including
$8$ bits to store the correct $\oreooffset$ for PC in each ROB entry, 
another $8$ bits to store the extracted offset bits of the load/store address in each LSQ entry,
and $1$ bit per LSQ entry to store the pre-computed virtual address check result.
For reference, the Mega BOOM processor has a $128$-entry ROB and a $64$-entry LSQ, resulting in $200$ bytes storage overhead.

In the commit stage, we add the \texttt{ArchPC} register, which is 64-bit.
The bits extraction module requires extra storage for recording randomization region metadata, which is shared with the modules in the execute and memory stage. 

Overall, using the Mega BOOM processor as an example, \sys incurs small storage overhead, 
including $256$ bytes in-core overhead and $584$ bytes overhead in the memory system.

\section{Evaluation}

\subsection{Experiment Setup}
We implement our kernel changes in \linux. 
We use a kernel patch by Hou et. al.~\cite{hou2023x86} to relocate kernel text and modules to a $\SI{444}{GB}$ unused region in the kernel address space.
We implement our microarchitecture changes in the gem5 simulator (v24.0)~\cite{Binkert:2011:gem5,Lowe-Power:2020:gem5-20} using the full system mode.
The microarchitecture configuration is similar to the configurations used in prior microarchitectural mitigation papers~\cite{yu2019speculative,weisse2019nda,loughlin2021dolma}.
We model a 1-core CPU for running the SPEC2017 benchmark~\cite{bucek2018spec} and security evaluation, and a 2-core CPU for running the LEBench benchmark~\cite{zzrcxb-LEBench-Sim}.
We configure each core as an 8-issue out-of-order (O3) superscalar processor with 32 load queue entries, 32 store queue entries, and 192 ROB entries.
The branch predictor uses the tournament prediction policy with 4096 BTB entries and 16 RAS entries.
We model 64KB 8-way L1 I-cache and D-cache and a 2MB 16-way L2 cache.
gem5 has a customized procedure for booting Linux which deviates from how Linux boots on a real processor and does not support kernel ASLR.
In our implementation, we modify gem5's booting procedure to support kernel ASLR.
Overall, our prototype involves 785 lines of code (LoC) changes to the Linux kernel, and 1897 LoC modification to the gem5 simulator.

\subsection{Performance Results}
\begin{figure}
    \centering
    \includegraphics[width=\linewidth]{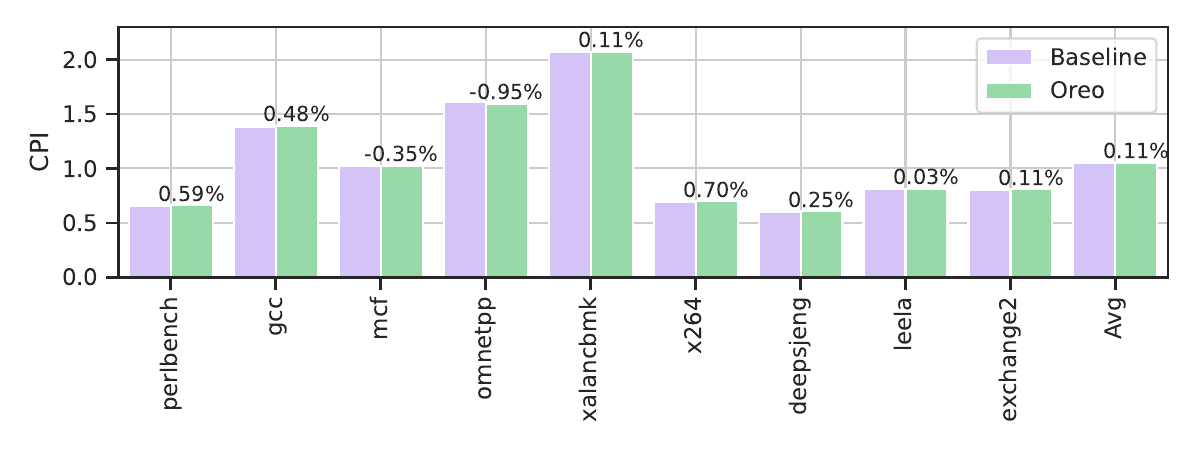}
    \caption{Performance evaluation results on SPEC2017}
    \label{fig:spec}
\end{figure}

\pgheading{SPEC2017}
We evaluate the performance overhead of \sys on the SPEC2017 IntRate benchmark~\cite{bucek2018spec}.
We configure the applications to use reference input size, warm up the system and microarchitecture structures by executing $10$ billion user-space instructions, and then measure the performance of the next $1$ billion instructions.
We skipped \path{557.xz_r} because the simulation crashes due to a bug in gem5 for not supporting certain instructions used by this application.
Figure~\ref{fig:spec} shows the reported CPI (cycles per user-space instruction) for each application. 
We label the CPI overhead ratios incurred by \sys compared to the baseline on top of the green bars.
The overall CPI (counting both user-space and kernel instructions) is mostly identical to the CPI for userspace only, given that the userspace time dominates when running the SPEC benchmark.

Across all applications, \sys introduces negligible performance overhead compared to the baseline, incurring \speccpioverhead CPI overhead on average.
This indicates that \sys's changes on the software and hardware have little impact on the overall performance of user-space applications.

We report the ratio of memory accesses where \sys applies its protection to provide insight into how much of the program execution triggers protection.
This information allows us to validate the relevance of the SPEC benchmark in evaluating the performance impacts of \sys.
In our prototype, we apply $\addrmask$ to any addresses that fall within the user address space, or the selected randomization region for kernel code or modules.
For the SPEC benchmark, at least \specmaskratiomin of the memory accesses trigger \sys's protection, with this being the minimum ratio observed among all applications.

\pgheading{LEBench}
We also evaluate the performance overhead of \sys using the LEBench benchmark~\cite{zzrcxb-LEBench-Sim}, a microbenchmark
suite that measures the performance of kernel operations (system calls).
We run the LEBench measurement for each system call \repeatnum times and report the average latency overhead in the top half of Figure~\ref{fig:lebench-overhead}, comparing \sys and the insecure baseline.
\sys introduces an average overhead \lebenchoverheadavg across all system calls in the suite, which is almost negligible.

However, unlike the SPEC benchmark, we observe large performance variations in the LEBench, ranging from \lebenchoverheadmin to \lebenchoverheadmax.
To understand the variation, for each system call, we plot the range of multiple measured latencies normalized to the medium latency of the insecure baseline in the bottom half of Figure~\ref{fig:lebench-overhead}.
Specifically, the colored box indicates first quartile (Q1) and third quartile (Q3) of normalized latency,
with the medium latency marked as a horizontal line dividing the box into two halves.
We additionally use hollow dots to mark outliers.

From the figure, we observe that large variations consistently exist for certain system calls, such as \texttt{thrcreate} and \texttt{pagefault}, in the baseline and when using \sys. Given that these system calls require coordination between multiple threads, we suspect that the large variation is caused by the dynamic non-deterministic scheduling of the Linux kernel.

Another difference between LEBench and SPEC2017 is the ratio of memory accesses that use \sys's protection.
In the LEBench benchmark, the average ratio of memory accesses that need $\addrmask$ is \lebenchmaskratioavg, ranging from \lebenchmaskratiomin to \lebenchmaskratiomax, which is lower than the ratio in SPEC2017.
This is because LEBench contains more memory accesses to kernel data, which is not protected by our prototype implementation.

\begin{figure}[t]
    \centering
    \includegraphics[width=\linewidth,trim={0 0.4cm 0 0.4cm},clip]{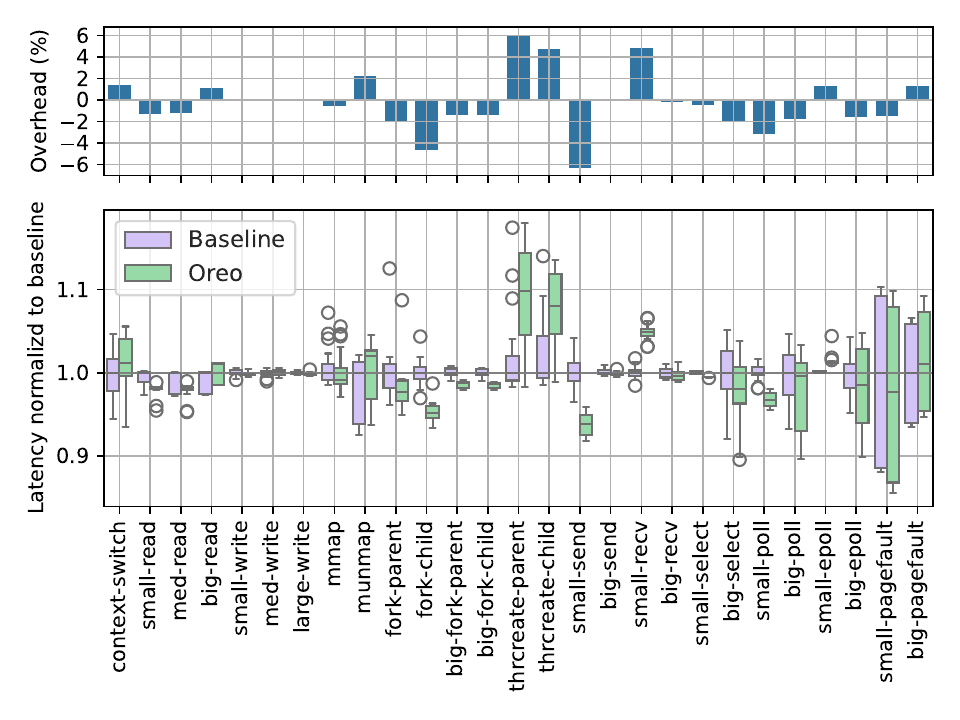}
    \caption{Performance evaluation results on LEBench}
    \label{fig:lebench-overhead}
\end{figure}

\subsection{Security Evaluation}
\label{sec:security-evaluation}
We conduct three experiments to validate the security properties of \sys and to demonstrate that our implementation aligns with the design presented in the paper.
\ifextendedversion
Additionally, we provide a formal proof in Appendix~\ref{sec:proof} to show that \sys achieves a non-interference property, preventing the leakage of ASLR secrets.
\else
Additionally, we provide a formal proof in a technical report~\cite{song-oreo-proof} to show that \sys achieves a non-interference property, preventing the leakage of ASLR secrets.
\fi

\pgheading{The Prefetch Attack}
We evaluate a prefetch attack~\cite{gruss2016prefetch,liu2023entrybleed,lipp2022amd} on the insecure baseline and \sys.
In both cases, the kernel code (including text and modules) is randomly relocated to the range from \texttt{0xffffff8000000000} to \texttt{0xffffffef00000000}.
The attacker scans the randomization region by probing addresses with a stride of $\SI{2}{GB}$.
The probing operation executes the prefetch instruction twice.
The first fetch operation brings the address into various microarchitecture structures.
The attacker then measures the latency of the second fetch operation.

Figure~\ref{fig:prefetch-plot} shows the attack results.
The prefetch attack works effectively on the insecure baseline, where the prefetch latency is distinctively lower at the randomized kernel address \texttt{0xffffff8601800040}, and is consistently high at the other unmapped kernel addresses.
This timing difference is caused by the fact that TLB caches address translation only for mapped addresses, not for unmapped addresses.
In contrast, the prefetch attack no longer works on \sys.
\sys converts virtual addresses to masked addresses and uses masked addresses to access the TLB.
Since the TLB already cached the corresponding masked address during the execution of the first prefetch operation, the second prefetch always results in a hit, resulting in indistinguishable low latency.

\begin{figure}[t]
    \centering
    \includegraphics[width=\linewidth]{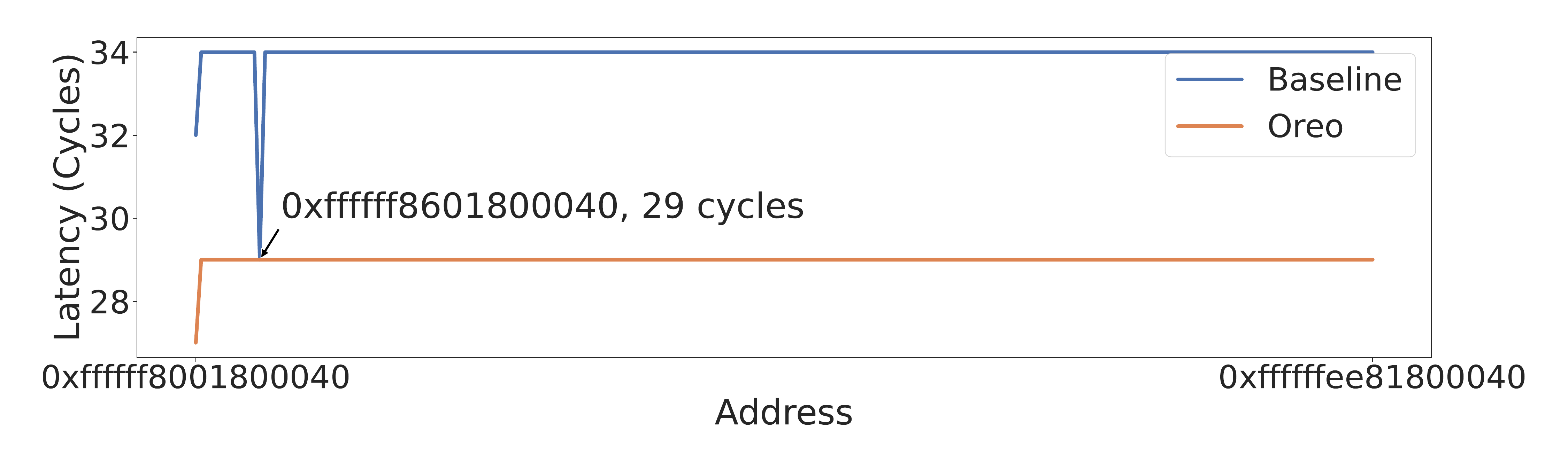}
    \caption{Evaluating the prefetch attack on the insecure baseline and \sys.}
    \label{fig:prefetch-plot}
\end{figure}

\begin{figure*}[t]
    \centering
    \includegraphics[width=\linewidth]{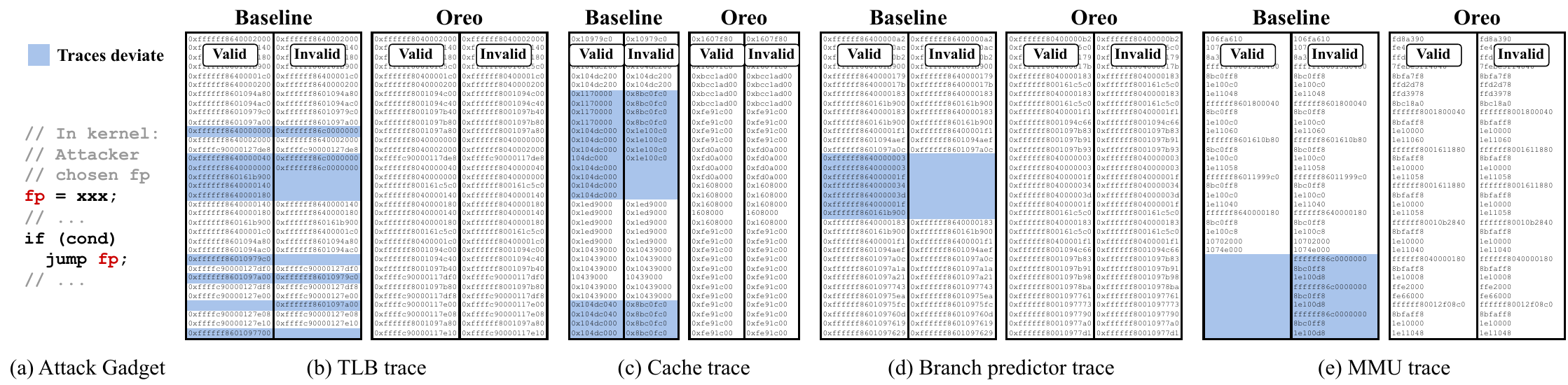}
    \caption{Evaluating the speculative code gadget probing attack on the insecure baseline and \sys. (b)-(e) compare the input traces of multiple microarchitecture structures when transiently jumping to a valid and an invalid address.}
    \label{fig:sec-trace}
\end{figure*}

\pgheading{Leakage Path \circled{1}}
In addition to the prefetch attack, we evaluate the speculative code region probing attack from the BlindSide paper~\cite{goktas2020speculative}.
We consider this attack as a representative attack for the leakage path \circled{1}.
Besides, we think solely relying on timing may not fully capture the effectiveness of our design.
Thus, we validate the attack results by comparing internal microarchitectural traces dumped from the gem5 simulator.

Figure~\ref{fig:sec-trace}(a) shows the attack gadget.
Following the attack described in BlindSide~\cite{goktas2020speculative}, we assume an attacker leverages a memory corruption vulnerability to overwrite a function pointer in the kernel memory.
The attacker then triggers the victim to transiently jump to the corrupted function pointer during the mis-speculation of a conditional branch.
In our experiment, we set the function pointer to a valid address and an invalid address and compare the input traces on TLB, cache, branch predictor, and MMU, shown in Figure~\ref{fig:sec-trace}(b)-(e).
We highlight the parts where the two traces deviate from each other, indicating an exploitable side channel.

On the baseline, the input traces to the four microarchitecture structures all deviate when transiently accessing the valid and the invalid addresses. 
In contrast, with \sys, the traces are identical since the same masked address is used for transient access.
This indicates the attacker cannot distinguish valid addresses from invalid ones through transient memory accesses, so \sys can successfully block leakage path \circled{1}.

\pgheading{Leakage Path \circled{2}}
We use microarchitectural traces, similar to Figure~\ref{fig:sec-trace}, to evaluate \sys against the second leakage path.
We trigger the victim kernel to execute a system call and record the input traces to various microarchitecture structures. 
We then examine these traces and find that on the baseline, the input traces to these microarchitecture structures all include addresses with secret $\oreooffset$. 
With \sys, the input traces only consist of addresses without these secret bits.

\section{Related Work}
\label{sec:related}

We discuss related work aimed at strengthening the security property of ASLR schemes.
We first discuss ASLR protection schemes that aim to address different information leakage threats, i.e., through microarchitectural attacks or software-level attacks.
We then discuss mechanisms designed to increase ASLR entropy.

\pgheading{Blocking Software-Based ASLR Bypasses}
Several prior work~\cite{lu2015aslr,vano2020kaslr,vano2020info,backes2014you,crane2015readactor,crane2015s,gionta2016preventing} aim to make it more difficult for attackers to leak ASLR secrets via exploiting software vulnerabilities.
For example, ASLR-Guard~\cite{lu2015aslr} uses encryption to prevent leaking code pointers.
KASLR-MT~\cite{vano2020kaslr} and Vano-Garcia et al.~\cite{vano2020info} block information leakage due to memory deduplication attacks.
XnR~\cite{backes2014you} and Readactor~\cite{crane2015readactor,crane2015s}, and Gionta et al.~\cite{gionta2016preventing} enable ``executing-only-memory'' to prevent reading and then leaking code pointers.
These techniques focus on software-level threats and are ineffective towards microarchitectural attacks.
They can complement \sys to further strengthen the security of ASLR schemes.

\pgheading{Blocking Microarchitectural-Attack-Assisted ASLR Bypasses}
This group of mitigation mechanisms~\cite{gruss2017kaslr,canella2020kaslr,gens2017lazarus} shares the same goal as our work.
Several defenses aim to prevent ASLR secrets from being leaked via virtual memory layout probing attacks, blocking leakage path \circled{1}.
One approach is to isolate the kernel and user-space address spaces to prevent memory layout probing, as demonstrated in
KAISER~\cite{gruss2017kaslr} (also known as KPTI) 
and LAZARUS~\cite{gens2017lazarus}.
However, given that some of the kernel trampoline pages are still mapped in the user space, the Linux prototype of KAISER~\cite{linux_kpti} is still vulnerable to ASLR bypasses, as shown in EchoLoad~\cite{gruss2017kaslr} and Entrybleed~\cite{liu2023entrybleed}.
Besides, software-level isolation does not help mitigate the second leakage path, where the ASLR randomized bits are leaked when the victim program uses secret pointers as program counters or load/store addresses.

Alternatively, 
FLARE~\cite{canella2020kaslr} makes accessing a valid (mapped) and invalid (unmapped) kernel addresses take the same amount of time.
It works by mapping all invalid kernel addresses to one valid physical page, so that accessing these invalid pages will need to go through the full page table walk.
This mitigation has several limitations.
First, it cannot block memory layout probing attacks that use BTB and TLB as side channels.
Second, given that the invalid addresses now map to an empty physical page, whose content is different from the actual valid pages, any speculative data-dependent accesses can be used to distinguish between invalid and valid addresses.
In contrast to this ad-hoc solution, \sys takes a much more comprehensive approach to make accessing invalid and valid addresses exhibit indistinguishable side effects as long as they map to the same masked address.

\pgheading{Increasing ASLR Entropy}
Several prior works aim to increase the entropy of ASLR~\cite{gallagher2019morpheus,nikolaev2022adelie,crane2015readactor,chen2017codearmor,accardi2020finer}
by introducing finer-grained ASLR, increasing the size of the randomization region, and re-randomization.
For example, FGKASLR~\cite{accardi2020finer} randomizes function orders in kernel code to make it harder to find code gadgets.
Adelie~\cite{nikolaev2022adelie} increases the kernel ASLR randomization region to the entire 64-bit virtual memory and re-randomizes the layout of kernel modules. 
Readactor~\cite{crane2015readactor,crane2015s} and CodeArmor~\cite{chen2017codearmor} re-randomizes code pointers or code memory layout. 
Morpheus~\cite{gallagher2019morpheus} periodically re-randomizes the code pointers with a higher frequency than previous approaches as it leverages complex hardware modifications to do the re-randomization.

\pgheading{Power-Induced Timing Side Channels}
Recent work~\cite{lipp2022amd} breaks ASLR through power-induced timing side channels.
Although it is out of \sys's protection scope,
\sys's implementation can make it easier to mitigate this class of side channels.
Power-based side channels arise 
when circuits perform computation using secret-dependent values.
Different inputs activate different transistors and lead to
different amounts of power consumption.
According to Figure~\ref{fig:design-uarch}, the protected secret bits are extracted from virtual addresses or obtained from TLB entries, and are only used in two equality checks (denoted by the ``\texttt{=?}'' boxes).
Thus, it becomes feasible to adopt classic mitigations, such as circuit masking~\cite{mangard2008power}, to secure these specific modules against power side channels.

\section{Conclusion}
This paper presented a systematic analysis of microarchitectural side-channel attacks for ASLR bypasses.
We use our analysis to guide the design of \sys.
\sys strengthens the security of ASLR via the modification to the memory interface.
By introducing an extra layer of masked address space and converting virtual addresses to masked addresses to be used to set up page tables and index into various microarchitecture structures, \sys constraints the secret offset exposure in both software and hardware.
Our security and performance evaluation demonstrates that \sys is an effective mitigation and introduces negligible performance overhead.

\section*{Acknowledgment}
The authors thank the Matcha Group (MIT) for their help and the anonymous NDSS reviewers for their feedback.
This work was supported in part by a gift from Amazon; by the Air Force Office of Scientific Research (AFOSR) under grant FA9550-22-1-0511;  by ACE, one of the seven centers in JUMP 2.0, a Semiconductor Research Corporation (SRC) program sponsored by DARPA.




%

\bibliographystyle{IEEEtranS}
\bibliography{refs}

\appendices





\section{Artifact Appendix}
\subsection{Description \& Requirements}

\subsubsection{How to access}
The artifact for reproducing the results in the paper is at 
\url{https://doi.org/10.5281/zenodo.14261065}. The source code of our implementation and experiments is available at \url{https://github.com/CSAIL-Arch-Sec/Oreo}.

\subsubsection{Hardware dependencies}
The artifact utilizes the gem5 simulator (v24.0)~\cite{Binkert:2011:gem5,Lowe-Power:2020:gem5-20} to emulate Linux and execute benchmarks in full-system mode, requiring only CPU, memory, and disk resources.
Each gem5 instance needs 1 core and $\SI{2}{GB}$ memory. 
The artifact's resources, including the source files and a generated disk image, require $\SI{50}{GB}$ of available disk space.
We estimate the runtime of each experiment in this artifact on an Intel(R) Xeon(R) Gold 5220R CPU with $\SI{252}{GB}$ memory.

\subsubsection{Software dependencies}
The artifact requires Vagrant (v2.4) and VirtualBox (v7.0) to build a disk image for running experiments in gem5 full-system mode.
It also requires Docker Engine (v24+) and Docker Compose (v1.29+) to run all the experiments.

\subsubsection{Benchmarks}
Reproducing the performance evaluation results requires SPEC 2017 benchmark~\cite{bucek2018spec} and LEBench benchmark~\cite{zzrcxb-LEBench-Sim}.
LEBench is included in the artifact source, while SPEC 2017 is not included because of copyright issues.

\subsection{Artifact Installation \& Configuration}
\subsubsection{Pre-setup}
In the artifact folder, run the following command to clone the git submodules:
\begin{minted}{bash}
git submodule update --init --recursive
\end{minted}
Note that \path{cpu2017-1.1.9.iso} is the official image for SPEC 2017 benchmark, which is not released with the artifact since it is not open-source. It should be placed in the artifact folder.
The file structure is as shown in Figure~\ref{fig:struct-outside}.

\begin{figure}[!h]
    \centering
\begin{tikzpicture}%
    \draw[color=black!60!white]
    \FTdir(\FTroot,artifact,artifact){
        \FTdir(artifact,experiments,experiments~~~~~\# Sources files to build disk image)
        \FTdir(artifact,gem5,gem5~~~~~~~~~~~~\# \sys's microarchitectural changes)
        \FTdir(artifact,linux,linux~~~~~~~~~~~\# \sys's kernel changes)
        \FTfile(artifact,cpu2017-1.1.9.iso~\;\# SPEC 2017 official image)
    };
\end{tikzpicture}
    \caption{Artifact file structure}
    \label{fig:struct-outside}
\end{figure}
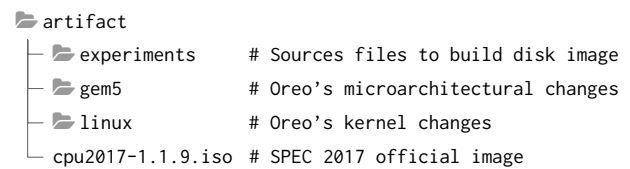

\subsubsection{Build}
In folder \path{artifact}, set up and enter the docker by running:
\begin{minted}{bash}
cd linux
docker-compose up -d
docker-compose exec x86_fs /bin/bash
\end{minted}
In the docker's shell, build Linux and gem5 by running:
\begin{minted}{bash}
cd /root/linux
python3 compile_scripts/compile.py --num-cores=80
cd /root/gem5
python3 scripts/compile.py --num-cores=80
\end{minted}
Please replace \path{80} with an appropriate number of CPU cores used for the build based on CPU and memory resources.

In the host's shell and in folder \path{artifact}, run the following commands to build the disk image, which takes about 1 hour.
\begin{minted}{bash}
tar -cvf linux.tar linux
cd experiments/disk-image
vagrant up
vagrant halt
vagrant global-status --prune
ls ~/VirtualBox\ VMs
# Find vm directory denoted as <vm-directory>
qemu-img convert -f vmdk -O raw ~/VirtualBox\ VMs/<vm-directory>/ubuntu-jammy-22.04-cloudimg.vmdk experiments.img
\end{minted}

The \path{artifact} folder is mounted at \path{/root} in the docker. In the following document, we will refer to the path inside the docker by default.

\subsection{Experiment Workflow}
For each experiment, we use gem5 to boot Linux under the full system mode and run benchmarks stored in the generated disk image \path{/root/experiments/disk-image/experiments.img}.
We provide scripts under the \path{/root/gem5/scripts} directory to conveniently start experiments using different setups (baseline and \sys) and benchmarks (e.g., SPEC 2017, LEBench and ASLR attacks).

\subsection{Major Claims}
\begin{itemize}
    \item (C1): \sys introduces negligible overhead compared to the insecure baseline, as demonstrated by experiments E2 and E3 in Figure~\ref{fig:spec}, \ref{fig:lebench-overhead}.
    \item (C2): In SPEC 2017 and LEBench, the ratio of memory accesses where \sys applies its protection (i.e., the ratio of masked memory accesses) is high, which indicates that the benchmarks are relevant to evaluating the performance impacts of \sys. This is demonstrated by experiment E5.
    \item (C3): \sys prevents the leakage of the ASLR secret offset through paths \circled{1} and \circled{2}, whereas the insecure baseline does not. This is demonstrated by experiment E4, as shown in Figure~\ref{fig:prefetch-plot}, \ref{fig:sec-trace}.
\end{itemize}

\subsection{Evaluation}
\subsubsection{Experiment (E1)}
[Functional Test] [1 human-minute + 10 compute-minutes]:
This experiment runs two user programs \path{hello} and \path{hello_invalid}, whose source code can be found in \path{/root/experiments/disk-image/experiments/experiments/src}.
This experiment demonstrates that \sys allows valid programs (e.g. \path{hello}) to run successfully while raising exceptions on invalid memory accesses at commit time (e.g. \path{hello_invalid}).

\textit{[Preparation]}
Enter the docker and change the working directory:
\begin{minted}{bash}
# Host, in folder artifact/linux
docker-compose exec x86_fs /bin/bash
# Docker
cd /root/gem5
\end{minted}
We will use \path{/root/gem5} as the default working directory for running the following experiments and identifying the path to output files.

\textit{[Execution]}
Running the following command, which takes 10 minutes:
\begin{minted}{bash}
python3 scripts/gen_checkpoint.py && python3 scripts/run_example.py
\end{minted}

\textit{[Results]}
\path{hello} is a valid program, so gem5 should finish simulation without any exceptions. Check \path{result/restore_ko_111_0c0c00/hello_/board.pc.com_1.device} for its output.
\path{hello_invalid} is a malicious program that accesses an invalid address. In the directory \path{result/restore_ko_111_0c0c00/hello_invalid_}, check \path{board.pc.com_1.device} for its output; check \path{stderr.log} for the gem5 exception message on committing invalid memory accesses.
The exception message can be located by searching for ``ASLR violation'' in the log file.

\subsubsection{Experiment (E2)}
[Performance Evaluation with SPEC 2017]
This experiment evaluates \sys's performance against the baseline using SPEC 2017 Intrate Benchmarks.
In our original setup, we warm up the system and microarchitecture structures by executing 10 billion user-space instructions, and then measure the performance of the next 1 billion user-space instructions. This setup takes several days to finish. For the artifact evaluation, we provide a scaled-down option to warm up with 1 billion user instructions. 
\sys introduces negligible overhead compared to the baseline.

\textit{[Preparation]} Same to E1

\textit{[Execution]} Run the following command:
\begin{minted}{bash}
python3 scripts/run_spec.py --gen-cpt --begin-cpt=1 --num-cpt=1 --num-cores=80 --user-delta=32 --spec-inst-warmup-step=1
\end{minted}

\textit{[Results]} Run the following command to parse the result:
\begin{minted}{bash}
python3 scripts/parse_spec.py --parse-raw --begin-cpt=1 --num-cpt=1 --roi-idx=2 --expected-stats=3
python3 scripts/parse_spec.py --begin-cpt=1 --num-cpt=1 --roi-idx=2 --expected-stats=3
\end{minted}
The CPI overhead introduced by \sys is recorded in \path{scripts/spec_output/merge_input_mean_user_cpi_1_2.pdf}.
The CPI overhead is expected to be negligible, as shown in Figure~\ref{fig:spec}. However, these benchmarks behave differently in different regions of interest. Hence, the measured absolute CPI might differ from the CPI in Figure~\ref{fig:spec} when using 1 billion warmup instructions.

\subsubsection{Experiment (E3)}
[Performance Evaluation] [1 human-minute + 8 compute-hours]:
This experiment evaluates \sys's performance against the baseline using LEBench, which measures and reports the average latency of various system calls. We ran the benchmark multiple times and used the average latency across all iterations as the final measurement. As shown in Figure 7, \sys introduces negligible average overhead compared to the baseline. The observed variance in latency across different runs is due to the dynamic, non-deterministic scheduling of the Linux kernel, which explains the performance gap between \sys and the baseline in some benchmarks.

\textit{[Preparation]} Same to E1.

\textit{[Execution]} 
We repeated the measurement 16 times to minimize the impact of dynamic scheduling on latency measurement. For artifact evaluation, we offer a scaled-down option to repeat the measurement 8 times by running:
\begin{minted}{bash}
python3 scripts/run_perf.py --gen-cpt --num-cpt=8 --begin-cpt=0 --num-cores=80
\end{minted}
To modify the number of repetitions, please adjust the \path{--num-cpt} option.
This process takes 8 hours to complete.

\textit{[Results]}
Run the following commands to parse experiment results and generate Figure~7:
\begin{minted}{bash}
python3 scripts/parse_perf.py --suffix-range=0,8 --plot --parse
\end{minted}
The output files are stored in \path{/root/gem5/scripts/plot}. 
Due to Linux's dynamic scheduling, the overhead measurements may vary from those shown in Figure 7; however, the average overhead is expected to be negligible.

\subsubsection{Experiment (E4)}
[Security Evaluation] [5 human-minute + 10 compute-minutes]:
This experiment consists of three parts:
\begin{itemize}
    \item The Prefetch Attack.
    \item Leakage path \circled{1}: 
    we use the speculative code region attack as a representative example of the leakage path \circled{1}. We demonstrate that speculatively accessing both valid and invalid addresses produces different microarchitectural traces on the baseline, leading to the leakage of the secret offset. In contrast, \sys produces the same effects regardless of the address validity, confirming that \sys effectively blocks the leakage path \circled{1}.
    \item Leakage path \circled{2}: we trigger a system call and record input traces to various microarchitecture structures, which indicates whether the system leaks secrets from the leakage path \circled{2}.
\end{itemize}

\textit{[Preparation]} Same to E1.

\textit{[Execution]}
Run the following command to run the three parts, which takes about 10 minutes.
\begin{minted}{bash}
python3 scripts/gen_checkpoint.py && python3 scripts/run_sec.py
\end{minted}

\textit{[Results]}
\begin{itemize}
    \item The prefetch attack: \path{scripts/parse_prefetch.py} is used to generate Figure~8 at \path{scripts/plot/prefetch_plot.pdf}, which indicates that \sys prevents the prefetch attack from leaking secrets through the timing side channel.
    \item Leakage path \circled{1}: For the baseline, the microarchitecture traces of speculatively accessing valid and invalid addresses are at \path{result/restore_ko_000_0c0c00/blindside_1_0c_/trace.out.gz} and \path{result/restore_ko_000_0c0c00/blindside_1_0d_/trace.out.gz}. For \sys, the traces are \path{result/restore_ko_111_0c0c00/blindside_1_0c_/trace.out.gz} and \path{result/restore_ko_111_0c0c00/blindside_1_0d_/trace.out.gz}. Traces of the baseline are not identical, while traces of \sys are identical, which verifies that \sys blocks the first leakage path.
    \path{scripts/parse_trace.py} is used to extract the condensed trace demonstrated in Figure~9 in the directory \path{scripts/plot/trace} and compare them.
    \item Leakage path \circled{2}: check \path{iTLBWalker} and \path{dTLBWalker} traces for both setups (generated by \path{scripts/parse_trace.py}), which prints the virtual addresses that need to be translated and physical addresses accessed during page table walk. On the baseline, secret dependent physical addresses are used for page table walk, while on \sys, secret independent addresses are used. Hence, \sys blocks leakage path \circled{2}.
\end{itemize}

\subsubsection{Experiment (E5)}
This experiment reports the ratio of memory accesses where \sys applies its protection (i.e., the masked ratio).
This is used to evaluate whether the benchmark is relevant to measure the performance impact of \sys. 

\textit{[Results]}
The masked ratio of SPEC 2017 benchmark can be found in the last column of \path{scripts/spec_output/merge_input_oreo_user_cpi_1_2.csv}.
Run the following command, and the masked ratio of LEBench can be found in the last column of \path{scripts/lebench_output/test_mask_ratio.csv}.
\begin{minted}{bash}
python3 scripts/parse_perf_stats.py --begin-cpt=0 --num-cpt=8
\end{minted}


\section{Attack Summary}
\label{sec:full-attack-summary}
To complement the attack analysis in Section~\ref{sec:attack-analysis}, we provide a detailed enumeration of existing microarchitectural-attack-assisted ASLR bypasses, including the leakage paths they take and the utilized side channels in Table~\ref{tab:full-attack-summary}.
Blindside~\cite{goktas2020speculative} involves several attacks including Code Region Probing and Spectre Probing.
We demonstrated that \sys prevents Code Region Probing in our security evaluation (Section~\ref{sec:security-evaluation}).
Spectre Probing is mitigated by existing Spectre mitigations~\cite{yu2019speculative,weisse2019nda,yan2018invisispec, ainsworth2021ghostminion,choudhary2021speculative,barber2019specshield,fustos2019spectreguard,koruyeh2020speccfi,schwarz2020context,yu2018data,daniel2023prospect,loughlin2021dolma,mosier2023serberus,yu2020speculative,ainsworth2020muontrap,khasawneh2019safespec,kiriansky2018dawg,li2019conditional,saileshwar2019cleanupspec,sakalis2019ghost,sakalis2019efficient}.
The Prefetch+Power attack in \cite{lipp2022amd} utilizes power side channels, which are not included in our threat model.
\sys successfully mitigates all other attacks in Table~\ref{tab:full-attack-summary}.

\begin{table*}
\centering
\caption{Summary of microarchitectural-attack-assisted ASLR bypasses. The numbers in the second column refer to the three leakage paths in Figure~\ref{fig:aslr-attack-graph}.}
\small
\begin{tblr}{
    colspec={Q[c]|Q[c]|Q[l]},
    rowspec={Q[m]},
    row{1} = {font=\bfseries},
    rows = {belowsep=1pt,abovesep=3pt},
    hline{1, 2, 3, 5, 6, 7, 9, 10, 11, 13, 15, 17, 19, 20, 21} = {1}{-}{},
    hline{2} = {2}{-}{},
    vline{1,4} = {abovepos = 1, belowpos = 1},
}
    \SetCell{c} {Leakage\\Path} & \SetCell{c} Attacks &  Side Channels \\
    \SetCell[r=9]{c}\circled{1} & Code Region Probing~\cite{goktas2020speculative} &  ICache Prime+Probe \\ 
    & Double Page Fault~\cite{hund2013practical} &  \SetCell[r=2]{l} Page fault latency to check TLB states \\
    &  DrK~\cite{jang2016breaking} & \\
    & Osiris~\cite{weber2021osiris} &  
    Cache side channel to monitor pipeline squashes caused by page fault
    \\
    & EchoLoad~\cite{canella2020kaslr} &  Cache side channel to monitor whether speculation is stalled \\
    & Data Bounce~\cite{schwarz2019store} &  \SetCell[r=2]{l} Cache side channel to monitor store-to-load forwarding behaviors \\
    & Fallout~\cite{canella2019fallout} &  \\
    & AMD Prefetch+Time~\cite{lipp2022amd} &  Prefetch latency to monitor page table walks \\
    & AMD Prefetch+Power~\cite{lipp2022amd} &  Prefetch power consumption to monitor page table walks \\
    \SetCell[r=9]{c}\circled{2} & Gruss et al.~\cite{gruss2016prefetch}  & \SetCell[r=2]{l} Prefetch latency to check TLB states \\
    & EntryBleed~\cite{liu2023entrybleed} &  \\
    & TLBleed~\cite{gras2018translation} &  \SetCell[r=2]{l} TLB Prime+Probe \\
    & TagBleed~\cite{koschel2020tagbleed} &  \\
    & Jump Over ASLR~\cite{evtyushkin2016jump} &  \SetCell[r=2]{l} BTB Prime+Probe \\
    & Phantom~\cite{wikner2023phantom} &  \\
    & AnC~\cite{gras2017aslr} &  \SetCell[r=2]{l} Cache Prime+Probe to monitor page table walks \\
    & Binoculars~\cite{zhao2022binoculars} &  \\
    & Take A Way~\cite{lipp2020take} &  DCache Collide+Probe \\
    \circled{3} & Spectre Probing~\cite{goktas2020speculative} &  ICache or DCache Prime+Probe \\
\end{tblr}
\label{tab:full-attack-summary}
\end{table*}

\ifextendedversion
\section{Security Proof}
\label{sec:proof}
We provide formal proof to show that \sys achieves a non-interference property to prevent attackers from distinguishing virtual memory layouts with different secret offsets.

\subsection{Abstract Machines With Memory Mapping Interfaces}
Prior works~\cite{guarnieri2021hardware,mosier2023serberus} define operational semantics for processors, however, without modeling the memory mapping interface.
In our security proof, we first define our own operational semantics for an abstract machine that incorporates an abstract memory mapping interface.

We model an abstract machine consisting of a physical memory and a series of microarchitecture states.
We denote the machine state as $S\!\coloneqq\! \qangle{m, \mu}$, where $m$ refers to the physical memory, and $\mu$ refers to the state of other hardware components.

\pgheading{Memory Configuration}
The physical memory is split into three parts: program, data, and page table structures, denoted as $m\!\coloneqq\! \qangle{P, D, PT}$.
The program $P$ is a map from physical addresses to instructions, denoted as $P:[\pistart, \piend)\rightarrow Inst$, denoting how instructions are stored in physical memory.
Similarly, The page table $PT$ is a mapping function to describe how page table entries (PTEs) are stored in memory, and the data memory $D$ describes other data.
Both $P$ and $PT$ are read-only after initialization.

\pgheading{Modeling Memory Mapping Interface}
We define the memory layout $L$ as a function that maps each virtual address to either a physical address or nothing ($\bot$).
If the program specifies accessing a virtual address $v$, the machine should access address $L(v)$ in the physical memory if $v$ is valid, or eventually crash if invalid.
We model ASLR as randomly selecting a layout from the set $\mathcal{L}\!\coloneqq\!\qty{L_i: i \!=\! 0, 1, \dots, N}$ and initializing the machine state accordingly so that the code pointers are randomized.

We define two functions to model the address translation procedure.
$\trans(x, PT)$ returns the address translation result for an address $x$ using the information in page table $PT$.
The address $x$ is a virtual address in the baseline and a masked address in \sys.
It returns $\bot$ if $x$ is invalid and otherwise returns the mapped physical address.
Specifically, in the baseline, $\trans(x, PT)\!=\!L(x)$.

$\ptw(x, PT)$ returns a sequence of physical addresses of page table entries needed for translating the given address $x$ to its corresponding physical address.
The sequence of physical addresses will be wrapped as memory requests to interact with various hardware components in the abstract machine.

\pgheading{Microarchitecture State}
We define the microarchitecture state as $\mu \!\coloneqq\! \qangle{BP, LSQ, Cache, TLB, E}$.
$BP$ is the branch prediction unit, updated by program counters (PCs).
$LSQ$ records load/store addresses, but does not record the load/store data.
Similarly, $Cache$ records the addresses of memory blocks that reside in the cache (i.e., the metadata), but does not record the data in cache blocks. 
It is updated by memory accesses issued by instruction fetch, load/store instructions, and page table walk operations.
$TLB$ is updated by a memory access's virtual address and its address translation result.
$E$ represents all other components inside the core, including the scheduler, reorder buffer, and register files.
These structures are not indexed by virtual or physical addresses and, thus, are less relevant to ASLR security.

We model four types of requests generated by $E$.
$\progreq(E)\!=\!\noneop() | \fetchop(v, v_\text{src}) | \loadop(v) | \storeop(v, d)$, where $v$ is the target virtual address of the corresponding fetch, load, or store operation and $d$ is data written to memory by the store operation.
The fetch operation additionally includes a $v_\text{src}$ argument.
When $v$ is the target address of a branch, $v_\text{src}$ is the source address of that branch. 
We use the branch source and destination information to update branch predictors.

This abstract model covers a variety of pipeline designs. 
It can model an out-of-order processor that supports speculative execution.
For example, $E$ takes input from the branch predictor, sends speculative $\fetchop$ requests to the memory system, and handles squashes on incorrect speculative execution internally.
Similarly, both $\loadop$ and $\storeop$ requests can be speculative. When $E$ finishes calculating load/store addresses of instructions in the reorder buffer, it sends the requests to $LSQ$, $TLB$, and $Cache$.
Furthermore, $E$ takes input from the physical memory $p$ and other microarchitecture components in $\mu$ to update its state.

\pgheading{Adversary Model}
We define the adversary's observation function over a state of the abstract machine $S$ as $\uarchobs(S)\!=\!\qangle{BP, LSQ, Cache, TLB}$.
This function defines a strong adversary model where the attacker monitors all microarchitecture structures indexed by addresses.

\subsection{Initialization and Execution Semantics}
Given a program $P$ and a virtual memory layout $L$, we initialize the hardware state $S\!=\!\init(P, L) \!=\! \qangle{P, D_L, PT_L, BP_0, LSQ_0, Cache_0, TLB_0, E_L}$. 
The page table $PT$,  the data memory $D$, and core components $E$ need to be initialized using the layout information $L$.
Other structures are initialized using a reset empty state.

We formalize hardware semantics to model the execution of the abstract machine.
We denote one-step execution as $S\xrightarrow{\progreq(E)}S'$, where functional structures in $E$ of the current state $S$ generate requests $\progreq(E)$, and the whole microarchitecture state is updated to $S'$.

A $t$-step execution of the machine is denoted as $S_0\rightarrow^*S_t$, representing $S_0 \xrightarrow{\progreq(E_0)} S_1 \dots \xrightarrow{\progreq(E_{t-1})} S_t$.
We denote the request traces as $\progreq(S_0, t)\!=\!\progreq(E_0),\progreq(E_1),\dots,\progreq(E_{t-1})$.
Correspondingly, we define the adversary's observation of the machine as a trace $\uarchobs(S_0, t) \!=\! \uarchobs(S_0), \uarchobs(S_1), \dots, \uarchobs(S_t)$.

The machine proceeds one step only if the request does not cause a crash. We prove \sys's security properties hold for the case when the machine does not crash.

\pgheading{Baseline Execution Semantics}
Given a page table $PT$ initialized using a layout $L$, we denote the page table as $PT_L$.
For all virtual addresses $v$, the hardware translation results match the layout query results, i.e., $\trans(v, PT_L) \!=\! L(v)$.
We use $\update(X, args)$ to denote how the microarchitecture structure $X$ is updated with input arguments $args$. For convenience, $args$ only lists the inputs that affect the state of $X$.
\begin{mathpar}
\small
\inferrule[Fetch]{
\progreq(E)\!=\!\fetchop(v, v_\text{src})\\
BP' \!=\! \update(BP, v, v_\text{src}) \\
Cache' \!=\! \update(Cache, \ptw(v, PT), \trans(v, PT), TLB)\\
TLB' \!=\! \update(TLB, v, \trans(v, PT))\\
E' \!=\! \update(E, BP', LSQ, Cache', TLB', P[\trans(v, PT)])
}{
\qangle{P, D, PT, BP, LSQ, Cache, TLB, E} \xrightarrow{\fetchop(v, v_\text{src})}\\
\qangle{P, D, PT, BP', LSQ, Cache', TLB', E'}
}
\end{mathpar}
\begin{mathpar}
\small
\inferrule[Load]{
\progreq(E)\!=\!\loadop(v)\\
LSQ' \!=\! \update(LSQ, v, \trans(v, PT)) \\
Cache' \!=\! \update(Cache, \ptw(v, PT), \trans(v, PT), TLB)\\
TLB' \!=\! \update(TLB, v, \trans(v, PT))\\
E' \!=\! \update(E, BP, LSQ', Cache', TLB', D[trans(v, PT)])
}{
\qangle{P, D, PT, BP, LSQ, Cache, TLB, E} \xrightarrow{\loadop(v)}\\
\qangle{P, D, PT, BP, LSQ', Cache', TLB', E'}
}
\end{mathpar}
\begin{mathpar}
\small
\inferrule[Store]{
\progreq(E)\!=\!\storeop(v,d)\\
D'\!=\!\update(D, \trans(v, PT), d)\\
LSQ' \!=\! \update(LSQ, v, \trans(v, PT)) \\
Cache' \!=\! \update(Cache, \ptw(v, PT), \trans(v, PT), TLB)\\
TLB' \!=\! \update(TLB, v, \trans(v, PT))\\
E' \!=\! \update(E, BP, LSQ', Cache', TLB')
}{
\qangle{P, D, PT, BP, LSQ, Cache, TLB, E} \xrightarrow{\storeop(v,d)}\\
\qangle{P, D', PT, BP, LSQ', Cache', TLB', E'}
}
\end{mathpar}
\begin{mathpar}
\small
\inferrule[None]{
\progreq(E)\!=\!\noneop()\\
E' \!=\! \update(E)
}{
\qangle{P, D, PT, BP, LSQ, Cache, TLB, E} \xrightarrow{\noneop()}\\
\qangle{P, D, PT, BP, LSQ, Cache, TLB, E'}
}
\end{mathpar}

In the baseline, the core components we wrapped as $E$ implicitly contain a security check on memory accesses. 
If $E$ tries to commit an instruction and encounters an exception, $E$ will stop sending new requests from the next step to resemble a crash.
As a result, both the request trace and the microarchitectural observation trace terminate.
In other words, a request trace with length $t$ implies that the machine does not crash in the first $t$ steps.

\pgheading{\sys Execution Semantics}
In \sys, virtual addresses are first converted to masked addresses and then translated to physical addresses.
The address translation procedure and address-indexed microarchitecture structures all use masked addresses as input.
We write down the operational semantics on \sys and highlight the parts that differ from the baseline.
\begin{mathpar}
\small
\inferrule[Fetch]{
\progreq(E)\!=\!\fetchop(v, v_\text{src}) \and
w\!=\!\addrmask(v)\\
w_\text{src}\!=\!\addrmask(v_\text{src})\\
BP' \!=\! \update(BP, \highlightmath{w}, \highlightmath{w_\text{src}}) \\
Cache' \!=\! \update(Cache, \highlightmath{\ptw(w, PT)}, \highlightmath{\trans(w, PT)}, TLB)\\
TLB' \!=\! \update(TLB, \highlightmath{w}, \highlightmath{\trans(w, PT)})\\
E' \!=\! \update(E, BP', LSQ, Cache', TLB', P[\highlightmath{\trans(w, PT)}])
}{
\qangle{P, D, PT, BP, LSQ, Cache, TLB, E} \xrightarrow{\fetchop(v, v_\text{src})}\\
\qangle{P, D, PT, BP', LSQ, Cache', TLB', E'}
}
\end{mathpar}
\begin{mathpar}
\small
\inferrule[Load]{
\progreq(E)\!=\!\loadop(v)\\
w\!=\!\addrmask(v)\\
LSQ' \!=\! \update(LSQ, \highlightmath{w}, \highlightmath{\trans(w, PT)}) \\
Cache' \!=\! \update(Cache, \highlightmath{\ptw(w, PT)}, \highlightmath{\trans(w, PT)}, TLB)\\
TLB' \!=\! \update(TLB, \highlightmath{w}, \highlightmath{\trans(w, PT)})\\
E' \!=\! \update(E, BP, LSQ', Cache', TLB', D[\highlightmath{\trans(w, PT)}])
}{
\qangle{P, D, PT, BP, LSQ, Cache, TLB, E} \xrightarrow{\loadop(v)}\\
\qangle{P, D, PT, BP, LSQ', Cache', TLB', E'}
}
\end{mathpar}
\begin{mathpar}
\small
\inferrule[Store]{
\progreq(E)\!=\!\storeop(v,d)\\
w\!=\!\addrmask(v)\\
D'\!=\!\update(D, \highlightmath{\trans(w, PT)}, d)\\
LSQ' \!=\! \update(LSQ, \highlightmath{w}, \highlightmath{\trans(w, PT)}) \\
Cache' \!=\! \update(Cache, \highlightmath{\ptw(w, PT)}, \highlightmath{\trans(w, PT)}, TLB)\\
TLB' \!=\! \update(TLB, \highlightmath{w}, \highlightmath{\trans(w, PT)})\\
E' \!=\! \update(E, BP, LSQ', Cache', TLB')
}{
\qangle{P, D, PT, BP, LSQ, Cache, TLB, E} \xrightarrow{\storeop(v,d)}\\
\qangle{P, D', PT, BP, LSQ', Cache', TLB', E'}
}
\end{mathpar}

We omit the semantics for the $\noneop$ request, since it is the same as in the baseline.
Furthermore, \sys adds the virtual address check request, denoted as $\checkop(v)$.
\sys stores the ASLR offset for $w\!=\!\addrmask(v)$ in the page table, so we define the operation to get the correct offset as $\getoffset(w, PT)$.
We then define the semantic for virtual address check as:
\begin{mathpar}
\small
\inferrule[Virtual-Address-Check]{
\progreq(E)\!=\!\checkop(v) \\
w\!=\!\addrmask(v)\\
v\!=\!\addrvirt(w, \getoffset(w, PT))\\
E' \!=\! \update(E, v)
}{
\qangle{P, D, PT, BP, LSQ, Cache, TLB, E} \xrightarrow{\checkop(v)}\\
\qangle{P, D, PT, BP, LSQ, Cache, TLB, E'}
}
\end{mathpar}

We do not need to have semantics for the cases where the above check fails with $v\!\neq\!\addrvirt(w, \getoffset(w, PT))$.
In our semantics, a crash is modeled as the termination of the execution trace. 
If the check fails, the machine will crash and cannot proceed to a new step of execution.

\subsection{Proof}
We begin by defining a notion of \textit{functional equivalence}.
Given two starting states initialized with the same program but different layouts, $S\!=\!\init(P, L)$ and $S'\!=\!\init(P, L')$, 
we say the execution of $S$ and $S'$ are functional equivalent for $t$ cycles,
if the two machines generate the same type of requests at each cycle and the virtual address in each request satisfies one of the following requirements: (1) the virtual addresses are the same, or (2) the virtual addresses are mapped to the same physical address.
\begin{definition}[$\progreq (S, t)\!\progeq\!\progreq(S',t)$]
\normalfont For $L$, $L'$, $P$, $S\!\coloneqq\! \init (P, L)$ and $S'\!\coloneqq\!  \init (P, L')$,
$\progreq(S, t)$ and $\progreq(S', t)$ are \emph{functional equivalent} if for all $k\!\in\! [0, t)$, $\progreq(S, t)[k]\!\coloneqq\! op(v, [v_\text{src}])$ and $\progreq(S', t)[k]\!\coloneqq\! op'(v', [v_\text{src}'])$, there is $op \!=\! op' \land  (v\!=\!v'  \lor L(v)\!=\!L'(v')\!\neq\! \bot)\land  (v_\text{src}\!=\!v_\text{src}' \lor L(v_\text{src})\!=\!L'(v_\text{src}')\!\neq\! \bot)$.
\end{definition}

We use functional equivalence as the assumption to prove our security property.
This assumption states that the ASLR secret is not used to calculate transmitter operands such as load/store addresses or branch targets, thereby not leaked via the third path in Section~\ref{sec:attack-analysis}.
In other words, regardless of the layouts chosen by ASLR, the machine speculatively or non-speculatively accesses the same virtual address or the same instruction/data (i.e., $v\!=\!v' \lor L(v)\!=\!L'(v')\!\neq\! \bot$).
Again, we assume the execution of the two states will not result in a crash.
The functional equivalence assumption also rules out the case when ASLR secrets are leaked via software attacks.

\pgheading{Reason About Attacks on the Baseline}
When a program executes using the baseline execution semantics, even if it satisfies the functional equivalence property, its observable microarchitecture states leak the layout.
Consider resolving a branch under two layouts $L \!\neq\! L'$ with functional equivalent requests, i.e. $S_i\xrightarrow{\fetchop(v, v_\text{src})}S_{i\!+\!1}$, $S_i'\xrightarrow{\fetchop(v', v_\text{src}')}S_{i\!+\!1}'$, and $(v\!=\!v'\lor L(v)\!=\!L'(v')\!\neq\! \bot)\land(v_\text{src}\!=\!v_\text{src}'\lor L(v_\text{src})\!=\!L'(v_\text{src}')\!\neq\! \bot)$.
Here $v_\text{src},v_\text{src}'$ are the source branch addresses and $v, v'$ are the branch target addresses.

When $v\!=\!v'$, given two different layouts, we consider the case where $v$ is valid in the layout $L$ while invalid in the layout $L'$. Then, their translation results are different ($\trans(v, PT)\!\neq\!\bot\!= \trans(v', PT')$).
According to the execution semantics, using different translation results to update $Cache$ and $TLB$ will result in distinguishable microarchitecture observations.
The above description reassembles the first leakage path in Section~\ref{sec:attack-analysis},
where probing the same attacker-controlled address leads to different side effects due to the secret-dependent layout.

Consider the other case where $v\!\neq\! v'$ but $L(v)\!=\!L'(v')\!\neq\! \bot$.
This reassembles the second leakage path in Section~\ref{sec:attack-analysis} where $v$, $v'$ are code pointers for the same victim function under different layouts $L$, $L'$.
According to the baseline execution semantics, using different virtual addresses to update a bunch of microarchitecture structures, including $BP$ and $TLB$, results in distinguishable microarchitecture observations.

\pgheading{\sys Memory Layouts}
We formalize the valid layouts considered by \sys, denoted as $\syslayout$.
In this proof, we assume the code length is equal to the subregion size.
Given a physical program memory $P: [\pistart, \piend) \rightarrow Inst$ and an ASLR randomization region $[\vistart, \viend)$ in virtual memory, \sys divides the virtual memory region into $N$ sub-regions with equal size $\codelen$. 
\sys's layout set is defined as $\syslayout\!=\!\qty{L_i: i\!\in\! [0, N)]}$ such that
each $L_i$ maps the $i$-th subregion in the virtual memory $[\vistart\!+\!\loadoffset_i, \vistart\!+\!\loadoffset_i\!+\!\codelen)$ to the physical memory $[\pistart, \pistart\!+\!\codelen)$
where $\loadoffset_i\!=\! i\!\times\! \codelen$.

We define \textit{mask equivalence} to describe that two machines generate requests with the same masked addresses.
\begin{definition}[$\progreq(S, t)\!\maskeq\! \progreq(S', t)$]
\normalfont
Two traces $\progreq(S, t)$ and $\progreq(S', t)$ are \emph{mask equivalent}
if for all $k\!\in\! [0, t)$, $\progreq(S, t)[k]\!\coloneqq\! op(v, [v_\text{src}])$ and $\progreq(S', t)[k]\!\coloneqq\! op'(v', [v_\text{src}'])$, there is $op \!=\! op' \land \addrmask(v)\!=\!\addrmask(v') \land \addrmask(v_\text{src})\!=\!\addrmask(v_\text{src}')$.
\end{definition}
We prove the following lemma, which states that if two \sys state machines are initialized using different layouts from $\syslayout$ and they are functional equivalent, then they are also mask equivalent.

\begin{lemma}\label{lemma:prog-mask-eq}
\normalfont
    For all $L$, $L'\!\in\! \syslayout$, $P$, $S\!=\!\init_{\sys}(P, L)$, and $S'\!=\!\init_{\sys}(P,L')$, 
    \begin{align*}
    \progreq(S, t) \!\progeq\! \progreq(S', t) \Rightarrow \progreq(S, t) \!\maskeq\! \progreq(S', t)
    \end{align*}
\end{lemma}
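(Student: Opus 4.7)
The plan is to prove the implication pointwise on the trace index, since both $\progeq$ and $\maskeq$ are defined as universally quantified conjunctions of local conditions over $k \in [0,t)$, with no dependence on the intermediate machine states $S_k, S_k'$. Thus it suffices to fix a single step $k$ with $\progreq(S,t)[k] = op(v, [v_\text{src}])$ and $\progreq(S',t)[k] = op'(v', [v_\text{src}'])$, observe that $op = op'$ is carried over directly from the hypothesis, and reduce the remaining obligation to the following key claim: for all $L, L' \in \syslayout$ and all addresses $a, a'$ with $a = a' \lor L(a) = L'(a') \neq \bot$, we have $\addrmask(a) = \addrmask(a')$. Applying this claim to the target pair $(v, v')$ and, when present, the source pair $(v_\text{src}, v_\text{src}')$, then discharges the step.

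For the key claim, the case $a = a'$ is immediate. In the remaining case, write $L = L_i$ and $L' = L_j$ for some $i, j \in [0, N)$. Since $L_i(a) \neq \bot$ and $L_i$ maps only the $i$-th subregion, and analogously for $L_j$, we may write $a = \vistart + i\cdot\codelen + \delta$ and $a' = \vistart + j\cdot\codelen + \delta'$ with $\delta, \delta' \in [0, \codelen)$. The layout definition gives $L_i(a) = \pistart + \delta$ and $L_j(a') = \pistart + \delta'$, so the hypothesis $L_i(a) = L_j(a')$ forces $\delta = \delta'$. Instantiating $\addrmask$ with $\codestart = \vistart$ and $\codealign = \codelen$ then yields
\[
\addrmask(a) = ((a - \vistart) \bmod \codelen) + \vistart = \delta + \vistart,
\]
and the same computation applied to $a'$ gives $\addrmask(a') = \delta' + \vistart = \delta + \vistart$. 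Hence $\addrmask(a) = \addrmask(a')$, which is what was needed.

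The main obstacle is not really an obstacle: the argument reduces to unfolding the definitions of $\syslayout$ and $\addrmask$ and carrying out modular arithmetic that is trivial because every layout in $\syslayout$ differs from every other only by a shift that is an exact multiple of the subregion size $\codelen$. The one subtle point worth flagging is that the argument implicitly relies on $\addrmask$ having the stated form on the randomization region; for addresses outside any randomization region, the $a = a'$ disjunct absorbs the obligation (since $\addrmask$ acts as the identity there), so no separate treatment is needed. A secondary sanity check is that the lemma is genuinely a pointwise statement about requests, and does not require induction over the execution semantics: the dynamic fact that both machines emit corresponding request types at each step is exactly what the antecedent $\progeq$ packages up for us.
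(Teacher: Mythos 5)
Your proposal is correct and follows essentially the same route as the paper's proof: a pointwise argument on the trace index, a two-case analysis ($v=v'$ versus $L(v)=L'(v')\neq\bot$), and the observation that layouts in $\syslayout$ differ only by shifts that are multiples of $\codelen$, so $\addrmask$ collapses corresponding addresses to the same value (your $\delta$ plays the role of the paper's relative distance $r$). The only cosmetic difference is that you factor the argument through an explicit key claim about address pairs and add a remark about addresses outside the randomization region, which the paper leaves implicit.
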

\begin{proof}
    For all $k\!\in\! [0, t)$, we denote $\progreq(S, t)[k]\!=\!op(v,[v_\text{src}])$ and $\progreq(S', t)[k]\!=\!op'(v', [v_\text{src}'])$.
    Following the functional equivalence definition, if $\progreq(S, t)\!\progeq\! \progreq(S',t)$, there must be 
    \begin{equation}
    \begin{aligned}
    op\!=\!op' \, \land & (v \!=\! v'  \, \lor \, L(v)\!=\!L'(v')\!\neq\! \bot) \\
    \land & (v_\text{src} \!=\! v_\text{src}'  \, \lor \, L(v_\text{src})\!=\!L'(v_\text{src}')\!\neq\! \bot).
    \end{aligned}
    \label{eq:func-eq}
    \end{equation}
    Suppose the offsets used to define $L$ and $L'$ are $\loadoffset$ and $\loadoffset'$ respectively. 
    Consider the following two cases:
    \begin{itemize}[leftmargin=*]
        \item If $v\!=\!v'$, then $\addrmask(v)\!=\!\addrmask(v')$. 
        \item If $L(v)\!=\!L'(v')\!\neq\! \bot$, then there must exists $r\!\in\! [0, \codelen)$, which denotes the relative distance of the instruction in the program, such that 
        \begin{align*}
            L(v) \!=\! L(v') & \!=\! \pistart \!+\! r\\
            v & \!=\! \vistart\!+\!\loadoffset \!+\! r \\
            v' & \!=\! \vistart \!+\! \loadoffset' \!+\! r
        \end{align*}
        According to the definition of $\syslayout$, there must be 
        \begin{equation*}
            \loadoffset \bmod \codelen \!=\! \loadoffset'  \bmod \codelen \!=\! 0.
        \end{equation*}
        Then, using the definition of $\addrmask$, we get
        \begin{align*}
            \addrmask(v)\!=\! \vistart \!+\! r \!=\! \addrmask(v').
        \end{align*}
    \end{itemize}
    Similarly, condition (\ref{eq:func-eq}) also implies that $\addrmask(v_\text{src})=\addrmask(v_\text{src}')$.
    Therefore, we prove that if $\progreq(S, t) \!\progeq\! \progreq(S', t)$, we have $\progreq(S, t) \!\maskeq\! \progreq(S', t)$.
\end{proof}

Next, we prove the following lemma, 
which states that for all the possible \sys layouts, \sys generates the same address translation requests and output for the same masked address.
\begin{lemma}
\normalfont
    For all $L$, $L'\!\in\! \syslayout$, $P$, $S\!=\!\init_{\sys}(P, L)$, $S'\!=\!\init_{\sys}(P,L')$, 
    $PT$ and $PT'$ are page tables determined by $L$ and $L'$,
    there must be $\trans(w, PT)\!=\!\trans(w, PT')$ and $\ptw (w, PT)\!=\!\ptw(w, PT')$.
    \label{lemma:init-pt}
\end{lemma}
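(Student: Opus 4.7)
The plan is to show that $PT$ and $PT'$, viewed as maps from masked addresses to physical addresses (and as physical arrangements of intermediate page-table pages), are identical; the two claimed equalities follow immediately. The underlying intuition is that every $L_i \in \syslayout$ shifts virtual addresses by an offset $\loadoffset_i = i\cdot \codelen$ that is an integer multiple of the subregion size, but $\addrmask$ strips out exactly these bits, so the masked-to-physical mapping is invariant under the choice of $L_i$.

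Concretely, I would first unpack the construction of $PT$ under $\init_{\sys}(P,L_i)$. For any $r \in [0,\codelen)$, the valid virtual address $v = \vistart + \loadoffset_i + r$ satisfies $\addrmask(v) = \vistart + r =: w$, and \sys installs at masked address $w$ a page-table entry pointing to $\pistart + r$. Since this output depends only on $r$ (and hence only on $w$), both $PT$ and $PT'$ install the same entry at every $w \in [\vistart, \vistart + \codelen)$. For masked addresses in the randomization region but outside $[\vistart, \vistart + \codelen)$, the entry is absent in both page tables by the same argument; for addresses outside the randomization region altogether, the mappings are independent of $L$ by construction.

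From this, $\trans(w,PT) = \trans(w,PT')$ is immediate, since $\trans$ simply reads the entry at $w$ and the two entries agree. For $\ptw(w,PT) = \ptw(w,PT')$, I would additionally invoke the property that the physical placement of intermediate page-table pages is determined at initialization as a function of the masked-address layout only, not of $L$; then the hierarchical walk for $w$ indexes into the same physical pages in the same order in both $PT$ and $PT'$. This can be formalized by a straightforward induction over the levels of the page table, showing that at each level the entries at the positions reachable from $w$ are identical in $PT$ and $PT'$.

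The main obstacle is precisely this hierarchical issue: a one-line argument only shows leaf entries agree, whereas $\ptw$ returns a sequence of physical addresses that depends on where each intermediate page table resides. I would close this gap by making explicit, either as a construction invariant of $\init_{\sys}$ or as a lemma, that the page-table allocator is a deterministic function of the masked-address layout, so that $PT$ and $PT'$ are in fact byte-identical. Once this invariant is in place the rest of the argument is routine.
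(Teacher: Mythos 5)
Your proposal is correct and follows essentially the same route as the paper: the paper also shows that the masked-to-physical mapping (written there as $F(w)=L(\addrvirt(w))=\pistart+r$) is independent of the chosen layout, and then concludes that $\trans$ and $\ptw$ agree because both are determined by that map. The only difference is that the paper compresses your ``deterministic page-table allocator'' invariant into the single assertion that translation and page-table walks are ``both determined by this map,'' so your explicit treatment of the intermediate-level placement issue is, if anything, more careful than the paper's.
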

\begin{proof}
    We define a function to map masked addresses to physical addresses as $F(w)\!=\!L(\addrvirt(w))$.
    For any layout $L\!\in\! \syslayout$ with its corresponding $\addrvirt$ and $\loadoffset$, given a masked address $w$ pointing to an instruction and the relative distance of the instruction in the program as $r$, we have
    \begin{align*}
        F(w) & \!=\! L(\addrvirt(w)) \!=\! L(w \!+\! \loadoffset)\\
        & \!=\! L(\vistart \!+\! \loadoffset \!+\! r) \!=\! \pistart \!+\! r.
    \end{align*}
    The equation above shows that $F$ is independent of $L$ and $\loadoffset$, meaning that on \sys, for different layouts, the mapping between masked addresses and physical instruction addresses is always identical.
    Since \sys's page table translation and page table walk are both determined by this map, for all $L$, $L'\!\in\! \syslayout$, we have $\trans(w, PT)\!=\!\trans(w, PT')$ and $\ptw (w, PT)\!=\!\ptw(w, PT')$.
\end{proof}

For the convenience of describing induction assumption in our proof, we define public equivalence, a property stronger than microarchitectural indistinguishability.
\begin{definition}[$S\!\pubeq\! S'$]
\normalfont
    Given two machine states
    \begin{align*}
        S & \!=\!\qangle{P, D, PT, BP, LSQ, Cache, TLB, E}\\
        S' & \!=\!\qangle{P', D', PT', BP', LSQ', Cache', TLB', E'},
    \end{align*}
    they are public equivalent ($S\!\pubeq\! S'$), if $P\!=\!P'$, $\uarchobs(S)\!=\!\uarchobs(S')$, and for all $w\!\in\! [\vistart, \vistart\!+\!\codelen)$, $\trans(w, PT)\!=\!\trans(w, PT')$ and $\ptw (w, PT)\!=\!\ptw(w, PT')$.
\end{definition}

We continue to prove the lemma below, which states that if two executions are mask equivalent, then they are not micro-architecturally distinguishable.

\begin{lemma}\label{lemma:mask-non-interference}
\normalfont
For all layouts $L$, $L'\!\in\! \syslayout$, $P$, and two initial states $S\!=\!\init_{\sys}(P,L)$, $S'\!=\!\init_{\sys}(P,L')$, 
\begin{align*}
\progreq(S, t) \!\maskeq\! \progreq(S',t)  \Rightarrow \uarchobs(S, t) \!=\! \uarchobs(S', t)
    \end{align*}
\end{lemma}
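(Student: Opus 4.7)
The plan is to prove Lemma~\ref{lemma:mask-non-interference} by induction on the execution length $t$, strengthening the induction hypothesis from $\uarchobs(S_k)\!=\!\uarchobs(S_k')$ to the full public equivalence $S_k\!\pubeq\!S_k'$, since this stronger property is what is actually preserved across a single step and trivially implies the desired conclusion. The base case $t\!=\!0$ is immediate: $\init_{\sys}(P, L)$ and $\init_{\sys}(P, L')$ share the same program $P$, and initialize $BP, LSQ, Cache, TLB$ to layout-independent empty states; moreover Lemma~\ref{lemma:init-pt} gives $\trans(w, PT)\!=\!\trans(w, PT')$ and $\ptw(w, PT)\!=\!\ptw(w, PT')$ for every masked address $w\in[\vistart, \vistart+\codelen)$.

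For the inductive step, assume $S_k\!\pubeq\!S_k'$ and that the $k$-th requests are mask equivalent, i.e.\ the same opcode with $\addrmask(v)\!=\!\addrmask(v')$ (and $\addrmask(v_\text{src})\!=\!\addrmask(v_\text{src}')$ when applicable). I would then do a case analysis on the request type ($\fetchop$, $\loadop$, $\storeop$, $\noneop$, $\checkop$) using the $\sys$ execution semantics. In each case, the updates to the attacker-observable structures $BP, LSQ, Cache, TLB$ are functions of exactly four inputs: the masked address $w$ (and $w_\text{src}$), its translation $\trans(w, PT)$, the page-walk trace $\ptw(w, PT)$, and the contents of physical memory at $\trans(w, PT)$ (for $\fetchop$ and $\loadop$). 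Mask equivalence supplies the first, Lemma~\ref{lemma:init-pt} together with the inductive hypothesis supply the second and third, and $P\!=\!P'$ handles instruction-fetch reads. The $\checkop$ rule touches only $E$, which is not part of $\uarchobs$, so it preserves the invariant trivially. Page tables are read-only after initialization, so the $\trans/\ptw$ part of $\pubeq$ propagates unchanged.

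The main obstacle is the store case, because the update to the data memory could in principle cause $D$ and $D'$ to diverge in a way that later loads would expose through cache fills or load-forwarded data feeding back into $E$. To close this gap I would augment the induction hypothesis with an auxiliary data-consistency invariant: for every physical address $p$ that has been the target of a store in the first $k$ steps, $D[p]\!=\!D'[p]$. This invariant is preserved because a mask-equivalent $\storeop(v,d)/\storeop(v',d)$ pair writes the same value $d$ to $\trans(\addrmask(v), PT)\!=\!\trans(\addrmask(v'), PT')$, and initially $D$ and $D'$ agree on all addresses written by the respective initializers whose contents depend only on $P$. With this auxiliary invariant in hand, the load case follows by reading the same $D[\trans(w, PT)]$ in both machines, completing the inductive step. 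The remainder of the argument is mechanical rule-by-rule bookkeeping; the conceptual work is confined to recognizing that mask equivalence of the request trace, together with the layout-independence of $\trans$ and $\ptw$ on masked inputs, is strong enough to shield every address-indexed structure from the ASLR secret.
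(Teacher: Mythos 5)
Your overall skeleton is exactly the paper's: strengthen the induction hypothesis to $S_k\!\pubeq\!S_k'$, establish the base case from the layout-independent reset state plus Lemma~\ref{lemma:init-pt}, propagate the $\trans/\ptw$ component because page tables are read-only, and do a rule-by-rule case analysis in which mask equivalence of the $k$-th requests forces identical updates to $BP$, $LSQ$, $Cache$, $TLB$, with $\checkop$ handled by noting that both machines must pass the check in order to take the step.

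The genuine problem is your treatment of the store/load cases via the auxiliary data-consistency invariant. First, that invariant is not derivable from the lemma's hypotheses: mask equivalence (like functional equivalence) constrains only the operation type and the addresses $v$, $v_\text{src}$, not the stored value $d$, and the initial data memory is $D_L$, i.e.\ it is initialized \emph{with} the layout, so $D$ and $D'$ may already disagree at step $0$ (a stored code pointer literally embeds the secret offset --- this is the third leakage path, which is excluded only at the level of Theorem~\ref{theorem:safe}'s assumption, not here). Second, the invariant is unnecessary, and needing it signals a misreading of the model: in the abstract machine, $Cache$ and $LSQ$ record only addresses/metadata, never data, so the updates of every $\uarchobs$-visible structure are functions of $\addrmask$-ed addresses, $\trans$, $\ptw$, and the TLB alone; the loaded value $D[\trans(w, PT)]$ flows only into $E$, which is not part of $\uarchobs$, and any divergence of $E$ cannot disturb the induction because the future request trace is not derived from $E$ --- its mask equivalence is the lemma's hypothesis for all $t$ steps. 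So the correct repair is simply to delete the auxiliary invariant and observe that data values never reach an observable structure; as written, your load case rests on a claim ($D[p]\!=\!D'[p]$) that is false in general under the stated assumptions.
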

\begin{proof}
    We use induction to prove that for all $k\!\in\! [0, t]$, $S_k\!\pubeq\! S_k'$, which is stronger than $\uarchobs(S, t) \!=\! \uarchobs(S', t)$. 
    We denote the state at cycle $k$ as $S_k \!=\!\qangle{P, D_k, PT_k, BP_k, LSQ_k, Cache_k, TLB_k, E_k}$.
    \begin{itemize}[leftmargin=*]
        \item \textbf{The base step ($S_0\!\pubeq\! S_0'$):} According to Lemma~\ref{lemma:init-pt}, we have $\trans(w, PT_0)=\trans(w, PT_0')$ and $\ptw(w, PT_0)=\ptw(w, PT_0')$. 
        Based on the initialization operation, we also have $\uarchobs(S_0)\!=\!\uarchobs(S_0')$. 
        Hence, $S_0\!\pubeq\! S_0'$.\\
        \item \textbf{The induction step ($S_k\!\pubeq\! S_k'\Rightarrow S_{k+1}\!\pubeq\! S_{k+1}'$):}
        By the induction assumption, we know for all $w$, $\trans(w, PT_{k})\!=\!\trans(w, PT_{k}')$ and $\ptw(w, PT_{k})\!=\!\ptw(w, PT_{k}')$.
        Since none of the operational semantics in \sys modifies $PT$, we have $PT_{k+1}\!=\!PT_k$ and $PT_{k+1}'\!=\!PT_k'$.
        As such, the address translation result ($\trans$) and requests generated by page table walks ($\ptw$) from the two machines are still the same at step $k\!+\!1$.\\
        
        Furthermore, by the mask equivalent assumption, we induce $\progreq(E_k)\!\maskeq\!\progreq(E_k')$.
        Following \sys's operational semantics, we derive that 
        for all requests $\fetchop, \loadop, \storeop, \noneop$, there must be $\uarchobs(S_{k\!+\!1})\!=\!\uarchobs(S_{k\!+\!1}')$.\\
        
        For the $\checkop$ request, 
        since both machines proceed one step in the cycle, the virtual-address-check must be successful on both machines, which implies that $\uarchobs(S_{k\!+\!1})\!=\!\uarchobs(S_{k\!+\!1}')$.
        Hence, $S_{k+1}\!\pubeq\! S_{k+1}'$.
    \end{itemize}
    Therefore, as the induction shows for all $k\!\in\! [0, t]$, $S_k\!\pubeq\! S_k'$, we conclude that $\uarchobs(S, t) \!=\! \uarchobs(S', t)$.
\end{proof}

Finally, we prove \sys achieves the microarchitectural indistinguishability property if the program and system satisfy the functionality equivalence assumption.
The theorem below can be directly derived using Lemma~\ref{lemma:prog-mask-eq} and Lemma~\ref{lemma:mask-non-interference}.

\begin{theorem}
\label{theorem:safe}
\normalfont
For all layouts $L$, $L'\!\in\! \syslayout$, $P$, $S\!=\!\init_{\sys}(P,L)$, and $S'\!=\!\init_{\sys}(P,L')$, 
\begin{align*}
\progreq(S, t) \!\progeq\! \progreq(S', t) \Rightarrow \uarchobs(S, t)  \!=\! \uarchobs(S', t).
    \end{align*}
\end{theorem}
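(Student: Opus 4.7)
The plan is to derive Theorem~\ref{theorem:safe} as a direct two-step composition of Lemma~\ref{lemma:prog-mask-eq} and Lemma~\ref{lemma:mask-non-interference}, which together form a chain: functional equivalence $\progeq$ implies mask equivalence $\maskeq$, which implies equality of the observation trace. Concretely, I would fix arbitrary $L, L' \in \syslayout$, a program $P$, and the corresponding initial states $S = \init_{\sys}(P, L)$ and $S' = \init_{\sys}(P, L')$, and assume the antecedent $\progreq(S, t) \progeq \progreq(S', t)$.

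The first step applies Lemma~\ref{lemma:prog-mask-eq} to the assumption, yielding $\progreq(S, t) \maskeq \progreq(S', t)$. This step silently relies on the structural property of $\syslayout$ that every candidate offset is a multiple of the subregion size $\codelen$, so that $\addrmask$ collapses any two virtual addresses pointing to the same physical target onto a single masked address. The second step applies Lemma~\ref{lemma:mask-non-interference} to this mask-equivalent pair of traces, producing $\uarchobs(S, t) = \uarchobs(S', t)$, which is exactly the consequent of the theorem.

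Since both lemmas quantify over $L, L' \in \syslayout$ and over $\init_{\sys}$-initialized states, their hypotheses match the theorem statement verbatim, so the applications are immediate with no side conditions to discharge. The main obstacle, in the honest sense, is not located in this theorem at all; the substantive arguments have already been absorbed by the two lemmas, namely the case analysis on $v = v'$ versus $L(v) = L'(v') \neq \bot$ in Lemma~\ref{lemma:prog-mask-eq}, and the induction over execution cycles combined with the per-request case analysis in Lemma~\ref{lemma:mask-non-interference}. The only non-mechanical aspect of the final proof is being explicit that one is composing implications rather than equalities, so that the chain is read left-to-right; writing it out as a two-line derivation is essentially all that remains.
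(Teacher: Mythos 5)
Your proposal is correct and matches the paper's own argument, which likewise obtains Theorem~\ref{theorem:safe} as a direct composition of Lemma~\ref{lemma:prog-mask-eq} and Lemma~\ref{lemma:mask-non-interference}. Nothing further is needed.
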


\fi

\end{document}